\documentclass[11pt]{article}

\usepackage[margin=1in]{geometry}
\usepackage[T1]{fontenc}
\usepackage[utf8]{inputenc}
\usepackage{lmodern}
\usepackage{microtype}
\usepackage{amsmath,amssymb,amsthm}
\usepackage{mathtools}
\usepackage{enumitem}
\usepackage{tikz}
\usepackage[dvipsnames]{xcolor}
\usepackage{hyperref}
\usepackage[nameinlink,noabbrev]{cleveref}
\usepackage{thmtools}
\usepackage{thm-restate}
\usepackage{tcolorbox}

\declaretheorem[name=Theorem,numberwithin=section]{theorem}
\declaretheorem[name=Lemma,sibling=theorem]{lemma}
\declaretheorem[name=Corollary,sibling=theorem]{corollary}

\declaretheorem[name=Definition,sibling=theorem]{definition}

\hypersetup{
  colorlinks=true,
  linkcolor=blue!50!black,
  citecolor=blue!50!black,
  urlcolor=blue!50!black
}

\newcommand{\cV}{\mathcal{V}}
\newcommand{\cH}{\mathcal{H}}
\newcommand{\cS}{\mathcal{S}}

\newcommand{\LP}{\mathrm{LP}}
\newcommand{\R}{\mathbb{R}}
\newcommand{\OPT}{\mathrm{OPT}}
\newcommand{\Prb}{\mathbb{P}}
\newcommand{\Ex}{\mathbb{E}}
\newcommand{\st}{\preceq_{\mathrm{st}}}
\newcommand{\E}{\mathbb{E}}
\DeclareMathOperator{\cost}{cost}

\title{Hitting Axis-Parallel Segments with Weighted Points}
\author{
  Rajiv Raman\thanks{Indraprastha Institute of Information Technology Delhi. Email: \texttt{rajiv@iiitd.ac.in}.}
  \and
  Siddhartha Sarkar\thanks{Indian Institute of Science Bengaluru. Email: \texttt{siddharthas1@iisc.ac.in}.}
  \and
  Jatin Yadav\thanks{Indian Institute of Technology Delhi. Email: \texttt{jatin.yadav@cse.iitd.ac.in}.}
}
\date{}

\begin{document}
\maketitle

\begin{abstract}
We study a geometric hitting-set problem in which the input consists of a set $P$ of
weighted points and a family $\cS=\cH\cup\cV$ of axis-parallel segments in the plane.
The goal is to select a minimum-weight subset of $P$ that hits every segment in $\cS$.
Even restricted geometric hitting-set problems are known to be computationally hard, and
for axis-parallel segments the standard decomposition into horizontal and vertical
sub-instances yields only a simple factor-$2$ approximation.

We present an LP-rounding algorithm that breaks the factor-2 barrier. 
For the weighted problem, we
obtain a randomized $(1+2/e)$-approximation by combining systematic rounding on
horizontal lines with an exact repair step on residual vertical sub-instances. In the
unweighted case, a sharper analysis gives a $(1+1/(e-1))$-approximation.
Finally, we consider the case where one of the sub-instances consists of lines
instead of line segments, a problem considered by Fekete et al. (Geometric Hitting Set for Segments of Few Orientations, Theor. Comp. Sys., 62 (2) 2018),. In this case, we improve their result to
obtain an approximation factor of $1+1/e$ and show that the problem is APX-hard.
We also present algorithms for the generalization to $d$ orientations, as well as
PTASes for bounded-complexity subclasses of the unweighted Hitting Set problem.
\end{abstract}

\section{Introduction}
A \emph{set system} $(P,\mathcal{S})$ 
consists of a set $P$ of elements and a collection
$\mathcal{S}$ of subsets of $P$.
Given a set system, $(P,\mathcal{S})$,
a set $P'\subseteq P$ is a \emph{hitting set} if each set in $\mathcal{S}$
contains an element of $P'$. In the \emph{unweighted} Hitting Set Problem,
the goal is to find a hitting set of minimum cardinality, and in the
\emph{weighted} Hitting Set Problem, we are additionally given a weight
function $w:P\to\R$ and the goal is to select a minimum weight hitting
set $P'$, i.e., a hitting set such that $\sum_{p\in P'} w(p)$ is minimized.
The set systems considered in this paper are finite. We assume that 
$|P|=n$ and $|\mathcal{S}|=m$.

Without restrictions, the Hitting Set Problem is equivalent to the Set Cover Problem on the \emph{dual set system}, where points in $P$ and sets in $\mathcal{S}$ swap roles. For these general systems, $O(\log m)$-approximation algorithms exist with nearly matching lower bounds. However, we can often bypass these limits by exploiting the structure of restricted instances. This paper focuses on \emph{geometric set systems}, where elements and sets are defined by objects in the plane. While Hitting Set and Set Cover remain NP-hard in these settings, they frequently admit better approximation ratios than general systems. Despite this, a unified set of techniques and a complete characterization of these systems remain elusive.

There are two broad frameworks that yield improved approximation
algorithms for Hitting Set and Set Cover. The first is via $\epsilon$-nets and LP-rounding (\cite{HausslerW87,BronnimannG95,EvenRS05, V10, ChanGrant2014})
that yield algorithms with approximation factors depending on the \emph{combinatorial complexity} 
(such as \emph{VC-dimension}, or \emph{shallow-cell complexity})
of the instance. However, these techniques at best yield $O(1)$-approximation algorithms
with large hidden constants even for simple objects such disks or squares.

A second framework is \emph{local search}, that yields PTASes when the geometric set system
is \emph{non-piercing}, but works only in the unweighted setting~\cite{mustafa2010improved, ChanH12, RR18}. 
Informally, a set $\mathcal{S}$ of objects in the plane is non-piercing if the boundary of each object in $\mathcal{S}$ 
is a simple Jordan curve, and for any $A,B\in\mathcal{S}$, $A\setminus B$ and $B\setminus A$ are both
connected sets (see~\cite{RR18} for a precise definition). 
On the other hand,
Chan and Grant~\cite{ChanGrant2014} (see also Har-Peled and Quanrud~\cite{Har-PeledQ15}) showed even with nearly congruent \emph{almost-squares}, the Set Cover and
Hitting Set problems become APX-hard. Recently, Banik et al.~\cite{BanikICALP26} showed that the local search technique
yields a PTAS for the Set Cover problem even when the geometric shapes are not non-piercing, i.e., they \emph{pierce}, but in a bounded manner.
However, no such result is known for the Hitting Set Problem.

Our goal is to develop improved approximation algorithms for structured, weighted geometric set systems. Specifically, we address the Hitting Set problem for axis-parallel line segments (formally defined below).

\begin{tcolorbox}[colback=black!2!white, colframe=black!70!white, title=\textbf{Weighted Geometric Hitting Set for Axis-Parallel Segments}]
\textbf{Input:} A finite set $P$ of points in the plane with a non-negative weight function $w \colon P \to \mathbb{R}_{\ge 0}$, and a family $\cS=\cH\cup\cV$ of axis-parallel segments (where $\cH$ are horizontal and $\cV$ are vertical).

\vspace{1mm}
\noindent\textbf{Goal:} Find a subset $Q\subseteq P$ that minimizes the total weight $\sum_{p \in Q} w(p)$ such that $Q$ hits every segment in $\cS$ (i.e., $Q\cap s\neq\emptyset$ for all $s\in\cS$).
\end{tcolorbox}

Axis-parallel segments generalize several tractable special cases. When all segments in $\mathcal{S}$ are parallel to a single axis, the problem becomes one-dimensional; alternatively, if segments only intersect those of orthogonal orientations, the problem maps to bipartite graphs. In both scenarios, the Hitting Set, Set Cover, and also the Independent Set problems are solvable in polynomial time because their natural LP-relaxations are integral, owing to totally unimodular constraint matrices~\cite{schrijver1998theory}. Crucially, this total unimodularity breaks down in our setting, where segments possess both orientations and parallel segments overlap.

For Hitting Set, Set Cover, and Independent Set, a trivial 2-approximation follows from independently solving the one-dimensional sub-instances for each axis and combining the results. Remarkably, while this remains the best-known approximation factor for Set Cover and Independent Set, our primary contribution is breaking this factor-2 barrier for the weighted Hitting Set problem.

\subsection{Our Contributions}
Our main contribution is a suite of improved approximation algorithms for the Hitting Set problem on axis-parallel line segments and points in the plane, summarized in Table~\ref{table:results}. For line segments in two orientations, we achieve a $(1+2/e) \approx 1.74$-approximation in the weighted setting and a $(1 + 1/(e-1)) \approx 1.58$-approximation in the unweighted setting. When one orientation consists of infinite lines instead of segments, we improve this to a $(1+1/e) \approx 1.36$-approximation. We further generalize our results to segments parallel to one of $d$ directions. These results are obtained via a randomized rounding procedure on the natural linear program, but can be derandomized using the standard method of conditional expectations. For the sake of completeness, we include the derandomization in Appendix~\ref{sec:derandomization}. We also present a PTAS for specific restricted instances.

\begin{table}[h!]
\begin{center}
\begin{tabular}{|l|l|l|l|}
\hline\hline
Problem & Approx. Factor & Complexity & Previous bounds \\
\hline\hline
Wt. axis-parallel   & $1+2/e$ (Th.~\ref{thm:wt})                         &   & $2$ \\
Unwt. axis-parallel & $1 + 1/(e-1)$ (Th.~\ref{thm:unwt})&  &  $2$,  APX-hard~\cite{FeketeEtAl2018}\\
$d$-dir wt.         & $\ln d + \ln\ln d + O(1)$ (Th.~\ref{thm:d-orient}) &  & $d$\\
Hor-seg, vert lines & $1+1/e$ (Thm.~\ref{thm:line})  & APX-hard (Thm.~\ref{thm:apxhard})                    &  $5/3$~\cite{FeketeEtAl2018} \\
$k$-restricted unwt.& PTAS (Thm.~\ref{thm:krestrict})                     &   & NP-hard \\
Lines in 3-directions&   & APX-hard (Thm.~\ref{thm:threeline})&     $4/3$~\cite{DBLP:conf/stoc/DuhF97}, NP-hard~\cite{FeketeEtAl2018}    \\ \hline\hline
\end{tabular}
\caption{The table above summarizes our results.
Note that the previously best-known 2-approximation is considered a folklore result.}
\label{table:results}
\end{center}
\end{table}

\begin{restatable}[Weighted]{theorem}{MainResult}
\label{thm:wt}
Let $\mathcal{S}=\mathcal{H}\cup\mathcal{V}$ be a set of horizontal and vertical line segments
in the plane and let $P$ be a set of points with a weight function $w:P\to\mathbb{R}_{\ge 0}$ 
such that each segment in $\mathcal{S}$ contains at least one point of $P$.
Then, 
there is a randomized $(1+2/e)$-approximation algorithm for the weighted Hitting Set problem
defined by $\mathcal{S}$ and $P$.
\end{restatable}

For the unweighted setting, we obtain a better approximation factor.

\begin{restatable}[Unweighted]{theorem}{UnWeightedResult}
\label{thm:unwt}
Let $\mathcal{S}=\mathcal{H}\cup\mathcal{V}$ be a set of horizontal and vertical line segments
in the plane and let $P$ be a set of points 
such that each segment in $\mathcal{S}$ contains at least one point of $P$.
Then, 
there is a randomized $(1+1/(e-1))$-approximation algorithm for the Hitting Set problem
defined by $\mathcal{S}$ and $P$.
\end{restatable}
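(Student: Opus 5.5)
We would reuse the algorithm behind Theorem~\ref{thm:wt} and change only the analysis of the repair step, using the fact that in the unweighted setting the optimum of a one-dimensional interval instance equals a \emph{packing} quantity. Solve the natural LP $\min\sum_p x_p$ subject to $\sum_{p\in s}x_p\ge 1$ for all $s\in\cS$, and let $\LP$ be its value. On every horizontal line independently, apply systematic rounding. Order the points left to right and draw a uniform offset $\theta\in[0,1)$. Select the points whose cumulative-mass interval contains a point of $\theta+\mathbb{Z}$. Then every horizontal segment, having mass at least $1$ on a contiguous run of its line, is hit deterministically, and $\Pr[p\text{ selected}]=x_p$ whenever $x_p\le 1$. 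So the expected cost of this phase is at most $\LP$. Next, on every vertical line, compute an exact minimum hitting set for the vertical segments that are still unhit. This is a one-dimensional interval problem, solvable in polynomial time by the greedy algorithm or total unimodularity.

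\textbf{Key step 1: independence and the probability of remaining unhit.} The points of a vertical segment $s$ share an $x$-coordinate, so they lie on distinct horizontal lines. Since the offsets are independent across lines, the events ``$p$ selected'' for $p\in s$ are independent. Writing $t_s=\sum_{p\in s}x_p\ge 1$, we get $\Pr[s\text{ unhit}]=\prod_{p\in s}(1-x_p)\le e^{-t_s}$. Independence also holds jointly across disjoint vertical segments on one vertical line, because these segments use disjoint sets of horizontal lines.

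\textbf{Key step 2: bounding the repair via packing.} For intervals on a line, the minimum hitting set equals the maximum number of pairwise disjoint intervals. Hence the repair cost on a vertical line $L$ is the size of a maximum disjoint family $D$ of unhit segments on $L$. We would bound this by charging each $s\in D$ to the LP mass inside $s$. Because the segments in $D$ are disjoint, each point is charged at most once, and each $s\in D$ receives total charge $t_s\ge1$. A direct argument on a fixed family is not enough, since $D$ depends on the randomness. So we would instead take the greedy (earliest right endpoint) disjoint family of unhit segments. Then we compare it, by a sweep/coupling argument along $L$, with a renewal-type process in which each successive segment survives with probability at most $e^{-t}$. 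This should give an expected repair cost of at most $\sum_{p\in L}x_p\cdot c$ for a constant $c$. We expect the sweep to yield $c=1/(e-1)$: along the sweep, a unit of LP mass is ``spent'' each time the greedy packing advances, and summing the geometric series $\sum_{k\ge1}e^{-k}=1/(e-1)$ gives the bound. This factor replaces the weaker weighted charging, which loses $2/e$ because a point can be charged once from each side. Summing over the vertical lines gives
\begin{equation*}
\E[\text{total}]\le \LP+\tfrac{1}{e-1}\LP\le\bigl(1+\tfrac{1}{e-1}\bigr)\OPT .
\end{equation*}

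\textbf{Main obstacle.} The delicate step is step 2: controlling the expected size of a maximum disjoint family of unhit segments when that family is chosen adaptively. Overlapping segments on the same vertical line share horizontal lines and are therefore correlated. We would first try to analyse the greedy sweep as a stopping-time process. At each step, condition on the prefix of $L$ processed so far, and argue that the next unhit segment requires fresh, still-independent mass whose survival probability is at most $e^{-(\text{mass})}$. If that conditioning fails, a fallback is to split each vertical line into consecutive blocks of LP mass exactly $1$, with fractional splitting of points. We would then argue that each block contributes at most one to the packing, and only if some segment inside the union of it and its neighbours survives. This coarser bound would still keep the result strictly below $2$, although it may not reach $1/(e-1)$ exactly.
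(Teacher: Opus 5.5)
Your algorithm, the Phase~I analysis, and the independence observation in Key step~1 are correct and match the paper. However, Key step~2 carries the entire improvement to $1/(e-1)$, and it is not proved. The sweep/renewal comparison is only asserted (``we expect the sweep to yield $c=1/(e-1)$''). As described, it also bounds the wrong event. A greedy sweep is driven by the probability that \emph{some} unhit segment exists beyond the current position. That is a union over overlapping, correlated segments, and it is not controlled by $e^{-t_s}$ for any single $s$; on a long line of overlapping unit-mass windows it tends to $1$. The charging you actually carry out only shows the repair cost is at most the LP mass of unselected points, i.e.\ $\E[\text{repair}]\le\sum_p x_p(1-x_p)$, which is the trivial factor~$2$. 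Your fallback is admittedly weaker than $1/(e-1)$, so the stated ratio is not established.

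The missing idea is to stop reasoning about individual segments and decompose each vertical line into \emph{blocks}, i.e.\ maximal runs of unselected points.
\begin{itemize}
\item Every unhit vertical segment lies inside one block. The one-dimensional optimum for a block $B$ is at most $\lfloor x(B)\rfloor$, because the LP restricted to $B$ is feasible and the interval LP is integral. In your packing language: disjoint segments of mass $\ge 1$ inside $B$ number at most $\lfloor x(B)\rfloor$.
\item Charge each block to the selected point directly below it, with an always-selected sentinel $0$ below the lowest point. Then the line's repair cost is at most $Y_0+\sum_{i\ge1}\mathbf{1}[i\text{ selected}]\,Y_i$, where $Y_i=\lfloor x(B(i))\rfloor$ and $B(i)$ is the unselected run directly above $i$.
\item Selection of $i$ depends only on the shift of $i$'s horizontal line, while $B(i)$ depends only on points above $i$, which lie on other horizontal lines. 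So the product bound gives $\Prb[x(B(i))\ge k\mid i\text{ selected}]\le e^{-k}$, hence $\E[Y_i\mid i\text{ selected}]\le\sum_{k\ge1}e^{-k}=1/(e-1)$. This, weighted by $\Prb[i\text{ selected}]=x_i$, is where your geometric series really comes from, and no stopping-time conditioning is needed.
\item The sentinel term adds an extra $1/(e-1)$ per line, which would push the ratio above $2$. It is absorbed by letting $t$ be the highest index with $\sum_{j\ge t}x_j\ge 1$. Then $Y_i=0$ for $i\ge t$, and $1+\sum_{i<t}x_i\le\sum_i x_i$.
\end{itemize}
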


We can generalize the result for axis-parallel segments in the setting where each segment is parallel to one of $d$ directions. 
Fekete et al.~\cite{FeketeEtAl2018} studied the setting where each object to be hit is not a single segment,
but is the union of 
$h$ segments (each coming from one of $d$ orientations). The authors provide an $(h\cdot d)$-approximation
for this setting. Applying the result below, it is possible to obtain an improved 
$h\cdot(\log d + \log\log d + O(1))$-approximation.

\begin{restatable}[d-Directions]{theorem}{dorient}
\label{thm:d-orient}
Let $\mathcal{S}$ be a set of segments, each parallel to one of $d$ directions in the plane for some $d \ge 2$, and let $P$ be a set of points with a weight function $w:P\to\mathbb{R}_{\ge 0}$ 
such that each segment in $\mathcal{S}$ contains at least one point of $P$.
Then, there is a randomized algorithm that is a
$
 (\ln d+\ln\ln d+O(1))
$-approximation relative to the natural LP.
\end{restatable}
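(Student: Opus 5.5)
We solve the natural LP relaxation: minimize $\sum_{p\in P} w(p)x_p$ subject to $\sum_{p\in s} x_p\ge 1$ for every $s\in\cS$ and $x\ge 0$. Let $x^*$ be an optimal solution and $\LP$ its value. We partition $\cS$ into classes $\cS_1,\dots,\cS_d$ by direction. For every segment $s$, the fractional mass $\sum_{p\in s}x^*_p\ge 1$ is carried by points on the line $\ell(s)$ supporting $s$. Each point $p$ lies on at most one line of each direction, so at most $d$ lines overall. Within a single direction class, the instance is a disjoint union of one-dimensional interval-hitting instances, one per line. These have totally unimodular (consecutive-ones) constraint matrices, so each of them can be solved exactly in polynomial time.

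The algorithm has two phases, following the pattern of Theorem~\ref{thm:wt}. In the \emph{scaling/rounding phase} we fix a parameter $\alpha=\ln d+\ln\ln d$ (to be tuned). Independently for each line $\ell$ of each direction, we perform systematic (dependent) rounding of the vector $\min(1,\alpha x^*)$ restricted to $\ell$. Concretely, we lay the points of $\ell$ in order along the line, concatenate their scaled masses on $[0,\cdot)$, draw a single uniform offset $\theta\in[0,1)$, and select every point whose interval contains a point of $\theta+\mathbb{Z}$. A point $p$ lies on up to $d$ lines, so we include $p$ if any of these roundings selects it. The expected cost of this phase is then at most $\sum_p w(p)\min(1,d\cdot\alpha x^*_p)$, which is too much. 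Instead, I would pick for each point one \emph{home} line at random, or simply take the union bound differently: assign $p$ marginal $\min(1,\alpha x^*_p)$ and run systematic rounding only on one fixed ordering per direction, coupling across directions through a single global sample. The cleaner way is per-direction rounding with independent offsets and marginal $\alpha x^*_p/1$ counted once per direction. That totals $d\alpha\LP$, which is bad. So I will instead use independent Bernoulli selection with probability $\min(1,\alpha x^*_p)$, giving expected cost at most $\alpha\cdot\LP$.

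Under independent selection, a segment $s$ with $\sum_{p\in s}x^*_p\ge1$ is missed with probability at most $\prod_{p\in s}(1-\min(1,\alpha x^*_p))\le e^{-\alpha}$. In the \emph{repair phase}, for each direction $i$ we let $\cS_i'$ be the set of unhit segments of direction $i$. We solve the one-dimensional instance on $\cS_i'$ optimally. Its LP is integral, and a feasible fractional solution is $x^*$ restricted to the points lying on segments in $\cS_i'$. Hence the repair cost for direction $i$ is at most $\sum_p w(p)x^*_p\,\mathbf{1}[p\text{ lies on some unhit segment of direction } i]$. The key step is bounding $\Prb[p \text{ lies on an unhit segment of direction } i]$. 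This event requires $p$ itself not selected, and for segments through $p$ one needs more care since there may be many. I would exploit the one-dimensional structure. On a line $\ell$, the unhit segments containing $p$ all contain $p$, so the relevant event is that the maximal unselected run around $p$ contains some segment. It suffices to charge only to a cleverly chosen fractional solution: for each unhit segment, take the LP mass of its points. A cleaner bound is to use as fractional repair solution, for each line, $x^*$ restricted to unhit segments, but to charge each unhit segment $s$ only its own mass $\sum_{p\in s}x^*_p$ normalized. The intended estimate is that $p$ is "needed" on line $\ell$ with probability at most $e^{-\alpha}$ up to constant factors, using left-maximal and right-maximal segments through $p$: it suffices to keep $x^*$ on points of the leftmost-ending and rightmost-starting unhit segments containing $p$. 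This gives a factor $2$, so the total expected repair cost is at most $2de^{-\alpha}\LP$.

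The total expected cost is therefore at most $(\alpha+2de^{-\alpha})\LP$. Choosing $\alpha=\ln d+\ln\ln d$ gives $2de^{-\alpha}=2/\ln d=O(1)$, so the total is $(\ln d+\ln\ln d+O(1))\LP$, as claimed. If $\alpha x^*_p\ge1$, the point is taken deterministically, which is consistent with the bounds above. The main obstacle is the repair-phase charging, namely showing that the fractional repair solution on each line has expected weight $O(e^{-\alpha})\cdot\sum_{p\in\ell}w(p)x^*_p$ despite the correlations among overlapping segments. My first attempt would be the left/right-extremal-segment argument, bounding by two events, each of probability $\le e^{-\alpha}$. The fallback is systematic rounding on each line, which makes the event "an unhit segment through $p$" depend only on the gap structure.
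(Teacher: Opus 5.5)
Your algorithm is sound. Independent rounding with probabilities $\min(1,\alpha x^*_p)$ costs at most $\alpha\LP$. Repairing each orientation exactly, with $x^*$ restricted to the union of unhit segments as a fractional witness, costs at most $\sum_p w(p)x^*_p\,\Prb[p\text{ lies on an unhit segment of direction } i]$.

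The gap is in bounding this probability, which is the step you flag yourself. Reducing to the leftmost-ending and rightmost-starting segments through $p$ does not work. The event is a union over \emph{all} segments containing $p$, and an unhit segment through $p$ does not force either extremal segment to be unhit. Concretely, take a line with $KN$ unit-weight points, $x^*\equiv 1/N$ (optimal, by $K$ disjoint windows), and every window of $N$ consecutive points as a segment. For $p$ near the middle, $p$ lies on an unhit window if and only if the unselected run through $p$ has LP mass at least $1$. As $N,K\to\infty$ the selected points form a rate-$\alpha$ Poisson process in LP-mass coordinates, and this probability tends to $\Prb[\Gamma(2,\alpha)\ge 1]=(1+\alpha)e^{-\alpha}$. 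That exceeds $2e^{-\alpha}$ for $\alpha>1$.

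Even the \emph{optimal} repair on this line costs about $\alpha e^{-\alpha}/(1-e^{-\alpha})$ per unit of LP mass; this is the rate-$\alpha$ analogue of the computation in Appendix~\ref{app:unweighted-tightness}. So a per-line bound $2e^{-\alpha}\sum_{p\in\ell}w(p)x^*_p$ is false for your algorithm once $\alpha\ge 2$, not merely unproven. A warning sign is that your estimate would allow $\alpha=\ln d$ and give $\ln d+O(1)$, which is stronger than the statement.

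What works is the run decomposition from Lemma~\ref{lem:LpRp}. If $p$ lies on an unhit segment of its direction-$i$ line, then $D_p+U_p\ge 1$. Here $D_p$ is the LP mass of the maximal unselected run ending at $p$, and $U_p$ is that of the maximal unselected run just beyond $p$. With independent coins these are independent. Take the shortest suffix (resp.\ prefix) of mass at least $t$ and use $1-\min(1,\alpha x)\le e^{-\alpha x}$; this gives $\Prb[D_p\ge t]\le e^{-\alpha t}$ and $\Prb[U_p\ge t]\le e^{-\alpha t}$. Lemma~\ref{lem:stochastic_sum_dominance} then bounds the probability by $(1+\alpha)e^{-\alpha}$.

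Your total becomes $\alpha+d(1+\alpha)e^{-\alpha}$. At $\alpha=\ln d+\ln\ln d$ this equals $\alpha+(1+\alpha)/\ln d=\ln d+\ln\ln d+O(1)$. The $\ln\ln d$ term is exactly what absorbs the extra factor $1+\alpha$.

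With this correction your route is a valid alternative to the paper's. The paper applies unscaled systematic rounding on only $k\approx\ln d+\ln\ln d$ primary orientations. That costs at most $k\LP$, hits primary segments with certainty, and gives survival probability $(1-x_p)^k$. Independence along a non-primary line holds because distinct points on it lie on distinct primary lines. The paper then repairs only the remaining $d-k$ orientations, obtaining $k+(d-k)(k+1)e^{-k}$. Your independent scaled rounding gets independence along lines for free and allows non-integral $\alpha$, at the price of repairing all $d$ orientations. The systematic-rounding fallback you mention is unnecessary.
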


We show that the natural LP-relaxation of the unweighted problem has an integrality gap of at least $5/4$.

\begin{restatable}[Integrality gap]{theorem}{IntegralityGap}
Let $\mathcal{S}=\mathcal{H}\cup\mathcal{V}$ be a set of horizontal and vertical line segments
in the plane and let $P$ be a set of points 
such that each segment in $\mathcal{S}$ contains at least one point of $P$.
The natural LP-relaxation of the problem has an integrality gap of at least $5/4$.
\end{restatable}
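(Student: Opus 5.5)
The plan is to exhibit one explicit small instance $(P,\cS)$ whose LP optimum is $L$ and whose integral optimum is at least $\tfrac54 L$. The target is fractional value $4$ and integral value $5$. Throughout, "LP" means the natural relaxation
\[
\min \sum_{p\in P} x_p \quad\text{s.t.}\quad \sum_{p\in s\cap P} x_p\ge 1 \ \ (s\in\cS),\qquad 0\le x\le 1 .
\]

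\textbf{Step 1: the kind of instance we need.} If every segment contains exactly two points of $P$, the problem is vertex cover on the graph whose edges are the segments. Odd cycles are the natural source of a gap. However, an axis-parallel triangle is not realizable, and $C_5$ only gives $3/(5/2)=6/5<5/4$. So we must use either segments containing three points, or a non-cycle graph whose vertex-cover gap is exactly $5/4$. I will look for the latter on a small grid, with $8$ points and the uniform solution $x\equiv 1/2$ feasible.

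\textbf{Step 2: the construction to try first.} Take the $3\times 3$ grid with the centre removed. Its boundary gives an $8$-cycle of two-point segments. Then add extra segments, including ones of size $3$ through collinear triples and short "chord" segments through the middle row and column, on suitably perturbed coordinates. The aim is that both bipartition classes of the $8$-cycle miss some segment. For example, the class of corners $\{a,c,g,i\}$ misses the horizontal chord through $d,f$. A symmetric trick should force the class $\{b,d,f,h\}$ to miss a segment as well. I would do this by shifting points off the grid so that some added segment meets only corner points, or only two corners plus a third point.

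\textbf{Step 3: the two bounds.} The LP upper bound is immediate: $x\equiv 1/2$ is feasible whenever every segment has at least two points, so the LP value is at most $4$. For the integral lower bound, any hitting set must be a vertex cover of the $8$-cycle. Such a cover of size $4$ must be exactly one of the two alternating classes. By Step 2 each class misses some segment, so every hitting set has size at least $5$.

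If the perturbed $8$-point gadget cannot be realized, the fallback is to take $k$ disjoint copies of a gadget with LP value $t$ and optimum $t'$, or to mix size-$3$ segments (valued $1/3$ per point) into the construction. Either way we would tune for a $5/4$ ratio and verify it with the same kind of case analysis.

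\textbf{Main obstacle.} The difficult step is Step 2: realizing the required combinatorics with axis-parallel segments. Every segment must contain exactly the intended points, and a long segment cannot skip over a collinear point. I would settle this by explicitly fixing coordinates and checking each segment's point set by hand.
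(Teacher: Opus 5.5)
\textbf{The gap is Step~2, and it is not a matter of choosing coordinates carefully: the configuration you want does not exist.}

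In your frame, the boundary segments $a$--$b$ and $b$--$c$ each contain exactly two points. This forces $b$ to lie between $a$ and $c$ on their common line, and likewise for the other three sides. Hence every axis-parallel segment through two corners contains a side-midpoint. So $\{b,d,f,h\}$ hits every segment with at least two points, and $\OPT=4$.

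Perturbing does not help. In fact no $8$-point instance on which $x\equiv 1/2$ is feasible can have $\OPT\ge 5$:
\begin{itemize}
\item Let $G$ be the graph joining consecutive points of $P$ on each horizontal and each vertical line. Every segment with at least two points contains an edge of $G$, so $\OPT\le 8-\alpha(G)$.
\item $G$ is triangle-free. If $P$ occupies $r$ rows and $s$ columns, then $|E(G)|=(8-r)+(8-s)\le 10$, because $rs\ge 8$ forces $r+s\ge 6$.
\item A triangle-free graph on $8$ vertices with $\alpha\le 3$ has at least $10$ edges. Each vertex has degree $2$ or $3$: its neighbourhood is independent, and its non-neighbourhood is triangle-free with $\alpha\le 2$, hence has at most $5$ vertices. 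A degree-$2$ vertex forces a $C_5$ on its non-neighbours, and a short count then gives at least $10$ edges.
\item The only equality configurations are the full $2\times 4$ (or $4\times 2$) grid and the $3\times 3$ grid minus one point. All of these have $\alpha(G)=4$.
\end{itemize}
So ``$\LP\le 4$, $\OPT\ge 5$ on eight points'' is unattainable.

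Your fallback does not repair this. Disjoint copies of a gadget have exactly the gadget's ratio, so you still need a base instance with ratio at least $5/4$. Finding one is the entire content of the theorem.

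\textbf{What is missing is a larger instance together with an actual proof of the integral lower bound.} The paper uses $16$ points in which every segment joins two consecutive points, so the problem is vertex cover on a graph $G$.
\begin{itemize}
\item The assignment $x\equiv 1/2$ and eight vertex-disjoint segments give $\LP=8$.
\item $\OPT=16-\alpha(G)=10$, because $\alpha(G)=6$.
\item The bound $\alpha(G)\le 6$ comes from partitioning the vertices into a $6$-cycle $C$ and two $5$-cycles $L,R$. An independent set takes at most $3$ vertices from $C$ and at most $2$ from each of $L$ and $R$.
\item If it takes $3$ from $C$, it must take an alternating triple. The neighbours of that triple eliminate three vertices of $L$ (or of $R$), leaving at most one there.
\end{itemize}
The $5/4$ gap comes from this interlocking of several cycles, where a maximum independent set of one part blocks another. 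A single even cycle with chords cannot produce it.
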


We also consider the setting when the segments in one of the orientations is replaced by lines,
a variant considered by Fekete et al.~\cite{FeketeEtAl2018}, who gave a (5/3)-approximation for the
unweighted case. We obtain a $1+1/e\approx 1.36$-approximation. 

\begin{restatable}[Lines and Segments]{theorem}{LineResult}
\label{thm:line}
Let $\mathcal{S}=\mathcal{H}\cup\mathcal{V}$ be a set of horizontal line segments and vertical lines
in the plane and let $P$ be a set of points with a weight function $w:P\to\mathbb{R}_{\ge 0}$ 
such that each object in $\mathcal{S}$ contains at least one point of $P$.
Then, 
there is a randomized $(1+1/e)$-approximation algorithm for the weighted Hitting Set problem
defined by $\mathcal{S}$ and $P$.
\end{restatable}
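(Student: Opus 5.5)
Our plan is to solve the natural LP, round the horizontal side by systematic (dependent) rounding line by line, and then repair the vertical side exactly. The LP is
\[
\min \sum_{p\in P} w(p)x_p \quad\text{s.t.}\quad \sum_{p\in s} x_p\ge 1 \ \ \forall s\in\cS,\qquad x\ge 0,
\]
with optimal solution $x^*$ and value $\LP\le \OPT$. Since every vertical line $\ell$ must be hit, we have $\sum_{p\in\ell} x^*_p\ge 1$ for each $\ell\in\cV$. The vertical lines are pairwise disjoint, so they partition the points lying on them into groups.

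The algorithm has two stages.

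\emph{Stage 1 (horizontal side).} The horizontal segments on a common horizontal line form a one-dimensional interval hitting-set instance, whose LP has a totally unimodular constraint matrix. We do not solve it deterministically. Instead, on each horizontal line we sort the points by $x$-coordinate and apply systematic rounding with threshold $\theta$ drawn uniformly from $[0,1)$: a point $p$ is selected when the interval $[F(p)-x^*_p, F(p))$ of prefix sums contains a point of $\theta+\mathbb{Z}$. Then $\Pr[p\text{ chosen}]=\min(x^*_p,1)$ (up to truncation). Every horizontal segment carries LP mass at least $1$ on consecutive points, so it is hit deterministically. Hence the expected weight of this stage is at most $\sum_p w(p)x^*_p\le\LP$.

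\emph{Stage 2 (vertical side).} Each vertical line $\ell$ that is still unhit is repaired exactly by adding its cheapest point $q_\ell=\arg\min_{p\in\ell} w(p)$. This is the optimal repair, since the lines are disjoint.

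To analyse Stage 2, we use that the thresholds on distinct horizontal lines are independent. The points on a fixed vertical line lie on distinct horizontal lines, so their selection events are independent. Therefore
\[
\Pr[\ell\text{ unhit}]=\prod_{p\in\ell}(1-x^*_p)\le e^{-\sum_{p\in\ell}x^*_p}\le 1/e.
\]
We then need to charge $w(q_\ell)$ against the LP. Write $c_\ell=w(q_\ell)$ and $y_\ell=\sum_{p\in\ell} x^*_p\ge 1$. The LP pays at least $c_\ell y_\ell\ge c_\ell$ on line $\ell$. So the expected repair cost is at most $\sum_\ell c_\ell e^{-y_\ell}\le \frac1e\sum_\ell c_\ell y_\ell\le \frac1e\LP$. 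Adding the two stages gives an expected total of at most $(1+1/e)\LP\le(1+1/e)\OPT$.

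The step we expect to be the main obstacle is the double use of LP mass. The same LP weight $w(p)x^*_p$ pays both for Stage 1 and for the charge $c_\ell y_\ell$ in Stage 2. This is acceptable only because the Stage 1 charge is the full expected cost $\sum w(p)x^*_p$, while the repair cost is bounded by a separate $\frac1e$ fraction of the same sum. The two bounds therefore add to $1+1/e$ without any conflict.

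One subtlety needs care: the independence claim and the bound $\Pr[p\text{ chosen}]\ge x^*_p$. We will assume without loss of generality that $x^*_p\le 1$. We also make sure that systematic rounding gives exact marginals $x^*_p$ and correlates only points on the same horizontal line. Points on one vertical line never share a horizontal line, so the product formula above is valid. If rounding error is a concern, the last worry is a point with $x^*_p$ near $1$ that is forced in. This only helps, since $1-x^*_p$ is then already small.
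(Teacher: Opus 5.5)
Your proposal is correct and is essentially the paper's proof: systematic rounding with exact marginals $x_p$ on every horizontal line through points of $P$, independence across horizontal lines giving $\Prb[\ell \text{ unhit}]\le \prod_{p\in\ell}(1-x_p)\le 1/e$, and charging the cheapest point $\min_{p\in\ell}w(p)\le\sum_{p\in\ell}w(p)x_p$ of each unhit vertical line; these charges sum to at most $\LP$ because distinct vertical lines are disjoint. Your remarks about ``rounding error'' and points being ``forced in'' are unnecessary: once you cap $x_p\le 1$, the marginals are exact.
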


This setting with lines in one direction was shown to be NP-hard by Fekete et al.~\cite{FeketeEtAl2018}
if the point set consists of all points in the plane. However, if the set $P$ is finite, we show
that in fact, the problem is APX-hard.

\begin{restatable}[APX-Hardness]{theorem}{Apxhard}
\label{thm:apxhard}
Let $\mathcal{S}=\mathcal{H}\cup\mathcal{V}$ be a set of horizontal line segments and vertical lines
in the plane and $P$ be a finite set of points in the plane such that each object in $\mathcal{S}$ contains at least one point of $P$.
The Hitting Set problem with $\mathcal{S}$ and $P$ is APX-hard.
\end{restatable}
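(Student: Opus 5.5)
We prove APX-hardness by an L-reduction (more precisely, a gap-preserving reduction) from a bounded-degree problem known to be APX-hard. The natural source is Vertex Cover in graphs of maximum degree $3$, or in cubic graphs; Minimum Vertex Cover there has no PTAS unless P$=$NP. A convenient intermediate formulation is Vertex Cover on the $2$-subdivision of a cubic graph, which is also APX-hard and has a very rigid structure.

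\textbf{Target structure.} Each vertical line $\ell_x=\{x\}\times\R$ hits exactly the points of $P$ with abscissa $x$. Each horizontal segment hits the points of $P$ on its row whose abscissa lies in an interval. Since $P$ is finite, we may compress coordinates. A vertical line therefore acts as a ``column constraint'' (choose at least one point of the column), and a horizontal segment as an ``interval-in-a-row constraint''. The plan is to encode each edge $uv$ of the graph $G$ as a gadget on a short horizontal segment containing exactly two points. One point represents $u$ and the other represents $v$. The vertical lines then enforce consistency between the copies belonging to the same vertex.

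\textbf{Construction.} Order the vertices $v_1,\dots,v_n$ and the edges $e_1,\dots,e_m$ of $G$. For edge $e_j=v_av_b$, place two points on row $y=j$. We cannot put both endpoints' copies at columns $a$ and $b$ with a segment covering only them, because intermediate columns would also lie on that row. To avoid this, give each edge private columns. Each vertex $v$ of degree $3$ gets a ``cycle'' gadget of columns and rows, and the edge point for $v$ in $e_j$ lies in a private column $c_{v,j}$. Consistency is enforced by alternating column lines and short row segments. The vertex gadget is a cycle of even length $2k$ of points, in which consecutive points are linked alternately by a vertical line (two points in a column) and a short horizontal segment (two adjacent points in a row). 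This gadget has exactly two optimal hitting sets of size $k$, namely the ``odd'' choice and the ``even'' choice, corresponding to $v\notin C$ and $v\in C$. The edge segment for $e_j$ is a short horizontal segment containing one designated point of each endpoint's cycle, adjacent in that row. Columns and rows are laid out in general position, so that no unintended incidences occur: every vertical line contains only its intended points, and every segment contains only its intended points. This is the step we expect to be the main obstacle. With lines in one orientation, every point on a column is forced into that column's constraint, so each column must contain exactly the intended pair. A careful staircase placement of the cycles, one fresh column per cycle edge, achieves this. The resulting instance is essentially Vertex Cover on a graph with maximum degree $3$ that is realized as a line/segment intersection pattern. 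The point of the gadgets is to show that every such subdivided graph is realizable.

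\textbf{Cost analysis.} The optimum is $\OPT_{HS}=\sum_v k_v+\OPT_{VC}(G)$, up to the standard subdivision accounting. Since $\sum_v k_v=O(m)=O(\OPT_{VC})$ for cubic graphs, the reduction satisfies condition (i) of an L-reduction. Condition (ii) also holds: any hitting set can be locally normalized without increasing its cost. In each cycle we switch to the parity choice that is consistent with the edge points it hits, and the cost can only drop. Hence a hitting set of cost $\OPT_{HS}+\delta$ yields a vertex cover of size at most $\OPT_{VC}+\delta$. APX-hardness of the hitting set problem then follows from APX-hardness of Vertex Cover on cubic graphs.
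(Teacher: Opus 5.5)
Your reduction has a genuine gap: the vertex gadget is symmetric, but Vertex Cover is monotone. Each even cycle on $2k_v$ points has two optimal hitting sets of the \emph{same} cost $k_v$, and you identify the one containing the designated points with $v\in C$. Now choose, in every gadget, the parity class that contains the designated points. This costs exactly $\sum_v k_v$ and already hits every edge segment $\{d_{u,j},d_{v,j}\}$. So $\OPT_{HS}=\sum_v k_v$ for every cubic graph $G$, and the instance carries no information about $\OPT_{\mathrm{VC}}(G)$. The identity $\OPT_{HS}=\sum_v k_v+\OPT_{\mathrm{VC}}(G)$ is therefore false: putting a vertex into the cover must cost one extra point, and in your gadget it is free. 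If instead the designated points of one vertex sit in different parity classes, the gadget no longer encodes membership of $v$ in $C$ at all. Your normalization for condition (ii) is moot for the same reason. It is also only asserted: one would need to show that an inconsistent gadget costs at least $k_v+1$, and that switching it uncovers no more edge segments than the points it saves.

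The even-cycle gadget is exactly the one the paper uses, but the paper reduces from \textsc{Max-2SAT(3)}, where negation breaks the symmetry. The clause line $\ell_j$ passes through a green point of $G_i$ for the literal $x_i$ and a red point for $\overline{x_i}$, so no uniform choice of colors satisfies everything. This gives $k\le B+(m-\OPT)$. The majority-color normalization then works because $d_i\le 3$: an inconsistent gadget (cost $\ge b_i+1$) loses at most one clause when recolored.

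If you want to keep a Vertex Cover source, you need asymmetric costs. For instance, drop the cycles and realize an even subdivision $H$ of $G$ directly, with every object containing exactly two points. Each degree-$3$ vertex becomes a point with one column partner and a left and a right row neighbor. Each vertical line contains exactly two points of $P$, and bends serve as subdivision points. Then use $\mathrm{VC}(H)=\mathrm{VC}(G)+\tfrac12\bigl(|V(H)|-|V(G)|\bigr)$ with $O(1)$ subdivisions per edge. You would still have to prove that such a layout is realizable, which you only sketch as a ``careful staircase placement''.
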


If instead of line segments, we have lines in two dimensions, the Hitting Set problem with a set $P$ of
points is polynomial time solvable via a reduction to Edge Cover in bipartite graphs. 
Fekete~\cite{FeketeEtAl2018} showed that if the set $P$ consists of all points in the plane, then
the problem is NP-hard. We sharpen their result to show that if we are additionally given
a finite set $P$ of points, then the problem becomes APX-hard. The proof can be found
in Appendix~\ref{app:3slc}.

\begin{restatable}[3-Line-APX-hardness]{theorem}{threeline}
\label{thm:threeline}
Let $\cS=\mathcal{X}\cup\mathcal{Y}\cup\mathcal{Z}$ be three sets of lines such that lines in the plane in each
set $\mathcal{X}, \mathcal{Y}$ and $\mathcal{Z}$ are parallel. Let $P$ be a set of points such that each line
contains at least one point. The Hitting Set problem defined by $\cS$ and $P$ is APX-hard.
\end{restatable}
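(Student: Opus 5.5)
We prove APX-hardness of hitting three families of parallel lines with a finite point set by an L-reduction from a bounded-degree APX-hard covering problem. The source problem is \emph{3-dimensional matching cover}, or equivalently Set Cover restricted to sets of size at most $3$, where each element lies in a bounded number of sets. Its APX-hardness for bounded occurrence is standard (Kann; also Vertex Cover in cubic graphs via edge-cover-type gadgets).

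The natural encoding is as follows. Take a 3-partite 3-uniform hypergraph on $X\cup Y\cup Z$ with hyperedges $T\subseteq X\times Y\times Z$. Each element $x\in X$ becomes a line of direction $0$, each $y\in Y$ a line of direction $\pi/2$, and each $z\in Z$ a line of a third direction. Each triple $t=(x,y,z)$ would become a single candidate point lying on all three lines. Then a hitting set is exactly a cover of $X\cup Y\cup Z$ by triples, plus possibly pairwise intersection points. The difficulty is that three lines are generally not concurrent. This is where the finiteness of $P$ is exploited: we choose $P$ to be only the chosen intersection points.

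Concretely, I would place the $X$-lines as horizontal lines $y=a_x$ and the $Y$-lines as vertical lines $x=b_y$. For each $z$, the $Z$-line is a slope-$1$ line $x-y=c_z$. A triple $(x,y,z)$ is realizable as a concurrent point iff $b_y-a_x=c_z$. So we need to assign reals so that exactly the hyperedges satisfy this additive condition. Triples that are not hyperedges may accidentally become concurrent, but we simply do not add such points to $P$. Hence the only requirement is that each hyperedge be concurrent, i.e.\ the system $b_y-a_x=c_z$ over all $t\in T$ must be solvable. This is not generally solvable, which is the main obstacle.

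To overcome it, I would allow points for pairs as well. For each line, include its intersections with the lines it meets, but weight them via gadgets. Alternatively, and this is what I would try first, reduce from a problem whose structure is already additive: Max-3DM instances built from bounded-occurrence MAX-3SAT (Kann's reduction) can be embedded on a grid so that the constraint $b_y-a_x=c_z$ holds. For this embedding I would use consecutive integer coordinates along variable cycles and clause gadgets, so every triple lies on an anti-diagonal by construction.

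If exact concurrency still fails, the fallback is to use only pairwise intersection points in $P$ and reduce from Edge Cover–type hardness for 3-partite graphs with a bounded-degree triangle structure. This works because minimum hitting sets then correspond to covers using points that each hit two lines; the third direction provides the non-bipartiteness needed for hardness. In this variant, a reduction from Vertex Cover in cubic graphs should work. Replace each vertex by a constant-size gadget of lines in the three directions, with points chosen so that the optimum equals $c\cdot|V|+\mathrm{VC}(G)$ for a constant $c$.

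The L-reduction inequalities then follow from the bounded degree, since $\mathrm{VC}(G)\ge|V|/4$ in cubic graphs. The remaining work is a routine check that any hitting set can be converted in polynomial time to a vertex cover losing at most the additive excess.
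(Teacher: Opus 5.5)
\textbf{There is a genuine gap: the proposal never produces a working reduction.}

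You make two correct observations. First, a point of $P$ lies on at most one line per direction, so the instance is a set system in which every set has at most one element from each of the three classes. Second, each 3-element set must be realized as a concurrency, which for three fixed directions is one linear equation (your $b_y-a_x=c_z$), and these equations, together with distinctness of offsets within a class, need not be solvable. But you never get past this obstacle.

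\begin{itemize}
\item The step ``embed Kann's 3DM instances on a grid so that every triple lies on an anti-diagonal'' is only asserted, with no construction and no check. For general bounded-occurrence 3DM hypergraphs, cycles of triples do impose relations: already two triples sharing two elements force two lines of the third class to coincide.
\item Your fallback cannot work at all. If every point of $P$ lies on at most two lines, take lines as vertices and points as edges (points on a single line act as loops). A hitting set is then exactly an edge cover, and minimum edge cover is polynomial-time solvable in \emph{arbitrary} graphs: its size is $|V|$ minus a maximum matching, by Gallai's theorem. So ``the third direction provides the non-bipartiteness needed for hardness'' is false, and no such vertex-cover gadget can exist unless $\mathrm{P}=\mathrm{NP}$. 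Any hardness proof must use points lying on three concurrent lines.
\end{itemize}

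\textbf{The missing idea} is to choose a source problem whose sets have at most one element per class under a 3-coloring of the elements, and whose 3-element sets are trivially realizable. The paper uses \textsc{Special-3SC} (Chan--Grant). For a cubic graph, a vertex $v_t$ with incident edges $e_i,e_j,e_k$ ($i<j<k$) yields elements $w_t,x_t,y_t,z_t$ and the sets
\[
\{w_t,a_i\},\ \{w_t,x_t\},\ \{x_t,y_t,a_j\},\ \{y_t,z_t\},\ \{z_t,a_k\}.
\]
\begin{itemize}
\item With classes $\{w_t,y_t\}$, $\{x_t,z_t\}$, $\{a_i\}$, each set has at most one element per class.
\item The single 3-element set of each gadget contains $x_t$, which belongs to no other 3-element set. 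So its concurrency condition is met by fixing $y_t$ and $a_j$ and solving for $x_t$.
\item All remaining unwanted incidences are finitely many forbidden positions, avoided by a generic choice.
\end{itemize}
The paper realizes this in the dual (elements as points on three vertical lines, sets as lines through them) and then applies point--line duality. This gives a cost-preserving L-reduction with $\alpha=\beta=1$.
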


Finally, we consider a restricted setting where each segment intersects segments of the opposite orientation
that lie on at most $k$ distinct parallel lines. We call this the $k$-restricted setting. In this setting, we show
that the problem admits a PTAS.
\begin{restatable}[k-restricted]{theorem}{KRestricted}
\label{thm:krestrict}
Let $\mathcal{S}=\mathcal{H}\cup\mathcal{V}$ be a set of 
horizontal and vertical line segments in the plane 
such that each segment intersects segments of the opposite orientation
lying on at most $k$ parallel lines.
Let $P$ be a set of points 
such that each segment in $\mathcal{S}$ contains at least one point of $P$.
Then, 
there is a PTAS for the Hitting Set problem with $\mathcal{S}$ and $P$.
\end{restatable}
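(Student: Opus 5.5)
We plan to prove the statement by standard \emph{local search}. Fix $\epsilon>0$ and let $\ell=c\,k/\epsilon^{2}$ for a suitable constant $c$. Start from any feasible hitting set $B\subseteq P$, for instance all of $P$. While some $B'\subseteq B$ with $|B'|\le \ell$ and some $R'\subseteq P$ with $|R'|<|B'|$ make $(B\setminus B')\cup R'$ feasible, perform that swap. Each swap decreases $|B|$, so there are at most $n$ swaps, each searchable in $n^{O(\ell)}$ time. Let $B$ be the resulting locally optimal solution and $R$ an optimal one. Points in $B\cap R$ can be discarded together with the segments they hit (the usual reduction), so we assume $B\cap R=\emptyset$.

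By the framework of Mustafa--Ray and Chan--Har-Peled, it suffices to build a \emph{locality graph} $G$ on $B\cup R$ with two properties:
\begin{enumerate}[label=(\roman*)]
\item every segment $s\in\cS$ contains a point $b\in B$ and a point $r\in R$ with $br\in E(G)$;
\item $G$ has balanced separators of size $O(\sqrt{k\,|V(G)|})$, hereditarily.
\end{enumerate}
Given these, an $r$-division of $G$ combined with local optimality yields $|B|\le (1+O(\sqrt{k/\ell}))|R|=(1+O(\epsilon))|R|$.

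To build $G$, consider each segment $s$. Both $B$ and $R$ hit $s$, so when walking along $s$ through the points of $(B\cup R)\cap s$ in order, some two consecutive points have different colours. Add that pair as an edge, drawn as the straight subsegment of $s$ between them. Every edge is then a horizontal or vertical piece of some segment, and its interior contains no other point of $B\cup R$. This gives (i).

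For (ii), we bound crossings using $k$-restriction.
\begin{itemize}
\item Two parallel edges on the same line have interiors that are either disjoint or equal. This holds because each edge joins consecutive points of $B\cup R$ on that line, and equal edges can be merged.
\item Parallel edges on different lines never meet. So crossings occur only between a horizontal edge $e\subseteq s$ and a vertical edge $f\subseteq t$.
\item In such a crossing, $t$ meets $s$. By $k$-restriction, $t$ lies on one of at most $k$ vertical lines.
\item On each such vertical line, the edges have disjoint interiors. Hence at most one of them (up to $O(1)$ at endpoints) passes through the single point where that line meets the line of $e$.
\end{itemize}
Thus every edge is crossed $O(k)$ times. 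Planarize $G$ by inserting a dummy vertex at each crossing. The result is a planar graph with $O(k\,|B\cup R|)$ vertices. Applying the planar separator theorem with zero weight on dummy vertices, and mapping a separator back to the incident original vertices, gives separators of size $O(\sqrt{k\,|B\cup R|})$ for $G$. The same holds for every subgraph, since deleting vertices only removes crossings. This is the known separator bound for graphs with $O(k)$ crossings per edge, and it supplies the $r$-division needed by the local-search analysis.

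The main obstacle is ensuring that the chosen edges have genuinely bounded crossings. If edges joined arbitrary pairs of a $B$-point and an $R$-point on a segment, parallel edges on one vertical line could overlap, and a single horizontal edge could cross unboundedly many of them. The fix is the "consecutive differently-coloured pair" choice, which makes edges on a common line interior-disjoint. I would verify carefully the degenerate cases: points lying on several segments, edges meeting at endpoints, and collinear overlapping segments. I would also check that the dummy-vertex separator argument is compatible with the weighted $r$-division used in the Chan--Har-Peled analysis.
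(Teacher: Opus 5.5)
Your proposal is correct and follows essentially the same route as the paper: local search, an exchange graph that joins a consecutive $B$/$R$ pair along each segment and draws the edge along that segment, the $k$-restriction bounding crossings per edge by $k$, planarization with zero-weight crossing vertices plus the planar separator theorem, and finally the Mustafa--Ray analysis. Your observation that edges on a common line have disjoint interiors, so each of the at most $k$ relevant lines contributes at most one crossing, makes explicit a step the paper only asserts. Your $O(k|B\cup R|)$ planarization size holds because each edge joins consecutive points on a horizontal or vertical line, so there are at most $2|B\cup R|$ edges; the paper instead uses the general $O(\sqrt{k}\,n)$ edge bound for $k$-planar graphs, which gives $O(k^{3/2}n)$ and is equally sufficient.
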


\subsection{Related Work}
\label{sec:related}
\smallskip\noindent
{\bf $\epsilon$-nets:} Geometric Hitting Set and Set Cover approximations heavily depend on shape complexity. The $\epsilon$-net framework\footnote{For a set system $(P,\mathcal{S})$, a set $P'\subseteq P$
is an $\epsilon$-net if it is a hitting set for all sets $S\in\mathcal{S}$ such that $|S|\ge\epsilon|P|$.
} (see~\cite{mustafa2022sampling} for results on $\epsilon$-nets) connects $\epsilon$-net sizes to LP integrality gaps, where bounded VC-dimension yields $O(\log\OPT)$-approximations \cite{BronnimannG95,EvenRS05,HausslerW87}. 
Many geometric set systems have bounded VC-dimension, thus yielding an $O(\log\OPT)$-approximation algorithms.
Since the $\epsilon$-net is obtained by uniform random sampling, the approximation algorithms extend
to the weighted setting as well. While  $\epsilon$-nets of size $O(1/\epsilon)$ 
are known for some objects such as 
\emph{pseudodisks}\footnote{
A set $\mathcal{S}$ of objects in the plane, each of whose boundary is a simple Jordan curve is a set of
pseudodisks if for any $A,B\in\mathcal{S}$, the boundaries of $A$ and $B$ intersect in either 2 or 0 points.
} and halfspaces~\cite{DBLP:conf/compgeom/Matousek90a,PyrgaR08}, translating to $O(1)$-approximation
for Hitting Set, these smaller nets do not yield improved approximation algorithms for the weighted problem. 
If a geometric set system has linear \emph{shallow-cell complexity}, then the \emph{quasi-uniform sampling}
technique~\cite{V10, chan2012weighted} yield improved approximation algorithms for the weighted problems.
However, point-segment systems lack linear shallow-cell complexity.  

\smallskip\noindent{\bf Piercing and non-piercing:}
For the weighted setting, there
is a QPTAS for the Set Cover problem~\cite{mustafa2015quasi} with non-piercing regions in the plane, but no such result is known for the Hitting Set problem.
For the unweighted problem, local search provides PTASes for Hitting Set of pseudodisks,
as shown by Mustafa and Ray~\cite{mustafa2010improved}. The technique also extends to
non-piercing regions~\cite{BasuRoy2018,RR18}, and also yields PTASes for the Set Cover, Independent Set and
Dominating Set problems. 
Recently, Banik et al.~\cite{BanikICALP26} showed that local search also yields PTASes for the Set Cover and Independent Set problems when the objects have a \emph{bounded amount of piercing}~\cite{BanikICALP26}.
Allowing objects to pierce in an unbounded manner however, renders the Hitting Set and Set Cover problems APX-hard, even for nearly congruent shapes like almost-square rectangles~\cite{ChanGrant2014} and nearly equilateral triangles~\cite{Har-PeledQ15}.

\smallskip\noindent{\bf Segments:} 
{\em Axis-Parallel Covering:} Fekete et al.~\cite{FeketeEtAl2018} proved that the Hitting Set problem for axis-parallel segments is APX-hard. For the restricted case where one direction consists of lines, they established NP-hardness and provided a $(5/3)$-approximation. For Set Cover, Kowalska and Pilipczuk~\cite{kowalska2024parameterized} showed APX-hardness for horizontal and vertical segments (even with $\delta$-extensions) and W[1]-hardness for the exact weighted problem. They also developed several FPT algorithms and parameterized approximation schemes parameterized by the solution size $k$.

\smallskip\noindent{\em Independent Set:} For the Independent Set (IS) problem, Marx~\cite{Marx2006iwpec} showed the axis-parallel case is in FPT, whereas arbitrary orientations are W[1]-hard. Parameterized approximation schemes also exist for the axis-parallel setting~\cite{DBLP:conf/esa/CslovjecsekPW24}. Furthermore, Caoduro et al.~\cite{DBLP:journals/jocg/CaoduroCPW23} established a $2-\epsilon$ lower bound on the ratio between clique cover and independence numbers for segment intersection graphs. Consequently, the natural LP-relaxation for IS on axis-parallel segments has an integrality gap of 2, contrasting with our Hitting Set results.

\smallskip\noindent{\em Arbitrary Orientations:} For segments with arbitrary orientations, $\epsilon$-nets have a known size of $O(1/\epsilon \log(1/\epsilon))$ because the VC-dimension is 2. However, Alon~\cite{DBLP:journals/dcg/Alon12} and Balogh et al.~\cite{BaloghSolymosi2018} established an $\epsilon$-net lower bound of $\Omega(1/\epsilon\log^{1/3}(1/\epsilon))$, limiting the best possible $\epsilon$-net-based approximation to $O(\log^{1/3-o(1)}\OPT)$. Finally, for the IS problem on arbitrary segments, there is a weighted QPTAS~\cite{AdamaszekHarPeledWiese2019} and an unweighted $O(n^{\epsilon})$-approximation~\cite{DBLP:journals/cpc/FoxP14}.

\smallskip\noindent{\bf Rectangles:}
Axis-parallel segments can also be seen as a special case of axis-parallel rectangles.
For the Hitting Set problem with axis-parallel rectangles, Aronov, Ezra and Sharir~\cite{DBLP:journals/siamcomp/AronovES10}
obtained an $O(\log\log\OPT)$-approximation by showing that this set system admits 
an $\epsilon$-net of size $O(1/\epsilon\log\log 1/\epsilon)$, and this is the
best approximation factor known. Pach and Tardos~\cite{pach2011tight}
showed that in fact, this bound is tight, and therefore, the natural LP-relaxation for
the Hitting Set problem has an integrality gap of $\Omega\left(\log\log\OPT\right)$.

\paragraph{Organization.}
\Cref{sec:preliminaries} presents the main algorithm and analysis. \Cref{sec:gap} gives an explicit integrality-gap example. The extension to $d$-orientations is in \Cref{sec:orientation}.
\Cref{sec:ptas} records a separator-based PTAS for bounded-complexity subclasses. \Cref{sec:apxhardness} gives the hardness results, and we
conclude in \Cref{sec:conclusion}.

\section{LP-rounding Algorithm}
\label{sec:preliminaries}
Let $P$ be a set of $n$ points in the plane with nonnegative weight function
$w \colon P \to \mathbb{R}_{\ge 0}$. Let $\cS=\cH\cup\cV$ be a set of
$m$ axis-parallel segments, where $\cH$ is the set of horizontal segments and $\cV$ is the set
of vertical segments. 
For a segment $s$, let $P\cap s$ denote the candidate points lying on $s$. We always
assume $P\cap s\neq\emptyset$ for every $s\in\cS$. When we refer to a horizontal or
vertical line, we mean a line that contains at least one point of $P$.
The natural LP relaxation for the Hitting Set Problem is
\[
\begin{aligned}
\min \quad & \sum_{p\in P} w(p)\,x_p \\
\text{s.t.}\quad & \sum_{p\in P\cap s} x_p \ge 1 \qquad \forall s\in \cS,\\
& 0\le x_p\le 1 \qquad \forall p\in P.
\end{aligned}
\]
Fix an optimal LP solution $x$, and let
\[
\LP := \sum_{p\in P} w(p)\,x_p.
\]
Since this is a relaxation of the hitting-set problem, $\LP\le \OPT$.

We use one elementary fact repeatedly: if all candidate points and all segments lie on a
single line, then the induced stabbing problem is an interval stabbing instance, whose
constraint matrix has the consecutive-ones property and is therefore totally unimodular~\cite{schrijver1998theory}.
Consequently, the one-dimensional problem can be solved exactly in polynomial time.

\medskip\noindent{\bf Algorithm:}
The algorithm has two phases. In the first phase, we use randomized rounding to obtain
a hitting set for the horizontal segments. In doing so, we ensure that each point $p$
is in the hitting set produced with probability $x_p$, its LP-value. 
This solution partitions the vertical segments into \emph{blocks}. Each block $B$
is a one-dimensional problem, and the LP solution restricted to $B$ is feasible
for the intervals contained in $B$. Therefore, we can obtain a solution of value at most $\sum_{p\in B}w(p)x_p$.
While the algorithm remains the same for the weighted and unweighted problems, the analysis for the unweighted problem can be somewhat sharpened.

We now describe Phase~I, where we obtain a hitting set for the horizontal segments.
Fix a horizontal line $h$ and let $p_1,\ldots, p_k$ be the points on $h$ from left
to right. We define the cumulative sums $a_i$, $i=0,\ldots, k$ as follows:
\[
a_0:=0,
\qquad
a_i:=\sum_{j=1}^{i} x_{p_j}
\quad (i=1,\dots,k),
\]
For each $i$ let $I_i := [a_{i-1},a_i)$. 
Choose a random shift $U_h$ uniformly from $[0,1]$. We select point $p_i$ in Phase~I if
and only if $I_i$ contains an integer of the shifted lattice $U_h+\mathbb{Z}$. We perform
this rounding independently for every horizontal line.

We claim that each point $p$ is selected in Phase~I with probability $x_p$, and
that the points selected in Phase~I constitute a hitting set for the horizontal segments.

\begin{lemma}
\label{lem:unif}
For each point $p\in P$, the probability that $p$ is selected in Phase~I is exactly
$x_p$.
\end{lemma}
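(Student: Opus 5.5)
The plan is to reduce the claim to a one-dimensional fact about a uniformly shifted lattice. Fix $p\in P$ and let $h$ be the horizontal line through $p$; by construction $p=p_i$ for some index $i$ in the left-to-right order on $h$. Whether $p$ is selected depends only on the shift $U_h$, so it suffices to compute $\Prb[(U_h+\mathbb{Z})\cap I_i\neq\emptyset]$, where $I_i=[a_{i-1},a_i)$ has length $x_{p_i}\in[0,1]$. (A point that lies on no horizontal line containing segments is still on some horizontal line containing a point of $P$, namely its own, so the same rounding applies to it.)

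The key step is to note that $(U_h+\mathbb{Z})\cap[a,b)\neq\emptyset$ holds if and only if the fractional part of $U_h-a$ lies in $[0,b-a)$. Indeed, a lattice point $U_h+z$ lies in $[a,b)$ exactly when $0\le U_h+z-a<b-a$, and because $b-a\le 1$, there is at most one such $z$ when $b-a<1$. When $b-a=1$, there is exactly one such $z$ and the event is certain.

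Since $U_h$ is uniform on $[0,1]$, the fractional part of $U_h-a$ is also uniform on $[0,1)$, because translation mod $1$ preserves the uniform measure. Hence the probability equals $b-a=x_{p_i}$. The endpoint $U_h=1$ has measure zero, so it does not affect the calculation.

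There is no real obstacle here. The only points needing care are the half-open convention for the intervals, the degenerate cases $x_p=0$ (an empty interval, never selected) and $x_p=1$ (always selected), and the bound $x_p\le 1$, which guarantees that the interval contains at most one lattice point. Independence across lines is not needed for this lemma.
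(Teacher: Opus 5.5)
Your proof is correct and follows the paper's argument: both reduce to the fact that a uniformly shifted copy of $\mathbb{Z}$ meets the half-open interval $I_i$, whose length is $x_{p_i}\le 1$, with probability exactly $|I_i|$. The only difference is that you justify this measure computation explicitly via the fractional part of $U_h-a_{i-1}$, whereas the paper states it in one line.
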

\begin{proof}
If $p=p_i$, then the interval $I_i$ has length $x_{p_i}\le 1$. As the shift $U_h$ varies
uniformly over $[0,1)$, the set of shifts for which $I_i$ contains a point of
$U_h+\mathbb{Z}$ has measure exactly $|I_i|=x_{p_i}$. Hence
$\Prb[p_i \text{ is selected in Phase~I}]=x_{p_i}$.
\end{proof}

\begin{lemma}
\label{lem:hor}
Every horizontal segment is hit by Phase~I with probability $1$.
\end{lemma}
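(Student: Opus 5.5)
Fix a horizontal segment $s\in\cH$, lying on the horizontal line $h$. Let $p_1,\dots,p_k$ be the points of $P$ on $h$ from left to right, as in Phase~I. The plan is to show that the intervals $I_i$ of the points on $s$ tile a half-open interval of length at least $1$. Any such interval contains a point of every shifted lattice $U_h+\mathbb{Z}$, so some point of $s$ is selected for every value of the shift.

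First, because $s$ is a contiguous subset of the line $h$, the points of $P$ on $s$ are consecutive in the left-to-right order. So $P\cap s=\{p_\ell,p_{\ell+1},\dots,p_r\}$ for some $\ell\le r$; this set is nonempty by the standing assumption that every segment contains a point of $P$. The intervals $I_\ell,\dots,I_r$ are consecutive half-open intervals, so their union is
\[
J:=[a_{\ell-1},a_r),\qquad |J| = a_r-a_{\ell-1}=\sum_{j=\ell}^{r}x_{p_j}=\sum_{p\in P\cap s}x_p\ge 1,
\]
where the last inequality is the LP constraint for $s$.

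Second, any half-open interval $[\alpha,\beta)$ with $\beta-\alpha\ge 1$ contains a point of $U_h+\mathbb{Z}$ for every $U_h\in[0,1]$. Indeed, the point $\alpha+((U_h-\alpha)\bmod 1)$ lies in $[\alpha,\alpha+1)\subseteq[\alpha,\beta)$ and belongs to $U_h+\mathbb{Z}$. This lattice point lies in some $I_i$ with $\ell\le i\le r$, so by the selection rule $p_i$ is selected, and $p_i$ lies on $s$.

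Hence $s$ is hit for every realization of $U_h$, not just almost surely, which gives probability $1$. No step here is a genuine obstacle. The only point needing care is the half-open convention: the $I_i$ must partition $J$ without gaps or overlaps, so that a lattice point landing exactly at some $a_i$ is still assigned to exactly one point. Points with $x_p=0$ give empty intervals and can be ignored. The result holds simultaneously for all segments on $h$, since they share the single shift $U_h$, and independently across lines.
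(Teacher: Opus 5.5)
Your proof is correct and takes essentially the paper's approach. The intervals of the consecutive points on $s$ tile $[a_{\ell-1},a_r)$, which has length at least $1$ by the LP constraint, and such an interval meets every shifted lattice $U_h+\mathbb{Z}$. Your explicit witness $\alpha+((U_h-\alpha)\bmod 1)$ simply fills in the one step the paper asserts without justification.
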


\begin{proof}
Let $s\in \cH$ be a horizontal segment on line $h$, and suppose it contains the
consecutive points $p_i,p_{i+1},\dots,p_j$. The union of the corresponding intervals is
$[a_{i-1},a_j)$, whose length is
\[
a_j-a_{i-1} = \sum_{t=i}^{j} x_{p_t} \ge 1
\]
by the LP constraint for $s$. Any interval of length at least $1$ contains a point of
$U_h+\mathbb{Z}$ for every shift $U_h\in[0,1)$. Therefore at least one point of
$s\cap P$ is selected. 
\end{proof}

In Phase~II, we solve the residual problem obtained on removing the points selected in Phase~I and the segments in $\mathcal{S}$ hit by these points. This residual problem consists of a disjoint collection of vertical segments, each of which is called a \emph{block}. Each block consists of a maximal consecutive run of unselected points on a vertical line and the segments that lie entirely in the block. Each block can be solved optimally either
via dynamic programming, or by solving a natural LP-relaxation for the block.
We now analyze the cost of the solution returned by the algorithm. First, we observe that the expected cost of the points selected in Phase~I is exactly equal to the LP cost.

\begin{lemma}
\label{lem:ph1}
$\Ex[\cost(\text{Phase~I})] = \LP$
\end{lemma}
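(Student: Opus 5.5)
The plan is to write the Phase~I cost as a sum of indicator variables and apply linearity of expectation together with Lemma~\ref{lem:unif}. For each $p\in P$, let $X_p$ be the indicator of the event that $p$ is selected in Phase~I. Then
\[
\cost(\text{Phase~I}) = \sum_{p\in P} w(p)\, X_p .
\]
Linearity of expectation holds regardless of the dependencies among the $X_p$. Points on the same horizontal line share the shift $U_h$, so these indicators are correlated, but that does not affect the expectation. Hence
\[
\Ex[\cost(\text{Phase~I})] = \sum_{p\in P} w(p)\,\Prb[X_p=1].
\]

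Next we invoke Lemma~\ref{lem:unif}, which gives $\Prb[X_p=1]=x_p$ for every $p$, so the sum equals $\sum_{p\in P} w(p)x_p=\LP$. One bookkeeping point needs checking: every $p\in P$ must take part in the Phase~I rounding. It does, because $p$ lies on the horizontal line through it, and that line contains a point of $P$, so it is one of the lines on which rounding is performed. Points with $x_p=0$ occupy degenerate intervals $I_i=[a_{i-1},a_{i-1})$, which are empty. Such points are never selected, consistent with probability $0$, and they contribute nothing to either side.

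The main subtlety is the half-open convention for the intervals. Adjacent intervals $I_i$ and $I_{i+1}$ could share an endpoint, and then a single lattice point would select two points. The half-open intervals $[a_{i-1},a_i)$ are disjoint, so this cannot happen. In any case, linearity only needs the marginals, which Lemma~\ref{lem:unif} already supplies. I therefore expect no real obstacle, and the proof is a two-line computation.
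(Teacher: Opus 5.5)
Your proposal is correct and matches the paper's proof. Both write the Phase~I cost as $\sum_{p\in P} w(p)\,\mathbf{1}[p\text{ selected}]$, apply linearity of expectation, and substitute $\Prb[p\text{ selected}]=x_p$ from Lemma~\ref{lem:unif}. Your remarks on correlated indicators within a line, on points with $x_p=0$, and on the half-open intervals are all sound but not needed beyond the marginals.
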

\begin{proof}
$\Ex[\cost(\text{Phase~I})] = \sum_{p\in P} w(p)\Prb[p \text{ selected in Phase~I}] =
\sum_{p\in P} w(p)x_p = \LP$, where the second-last equality follows from Lemma~\ref{lem:unif}.
\end{proof}

We analyze Phase~II separately for the weighted case, the unweighted case, and the case where vertical segments are replaced by lines.

\subsection{Weighted Hitting Set}
\label{sec:weighted}

Consider a block $B$ in Phase~II. Let $x(B)$ denote the sum of the LP values of all the points contained in the block. If $x(B) < 1$, then observe
that no segment is completely contained in $B$. We can therefore, safely ignore such blocks and only deal with blocks with $x(B)\ge 1$. Let the weight of a block $B$ be $w(B)=\sum_{i\in B} w(i)x_i$. For a point $p\in P$, let $A_p$ be the event that $p$ is unselected in Phase~I and the block containing $p$ has total LP mass at least $1$. Summing the bound above over all blocks and charging each block to its LP mass gives
\[
\Ex[\cost(\text{Phase~2})]
\le
\sum_{p\in P} w(p)x_p\,\Prb[A_p].
\]
We now obtain an upper bound on $\Prb[A_p]$.
Fix a vertical line $\ell$ with ordered points $1,2,\dots,m$ from bottom to top, and fix a
point $p\in\{1,\dots,m\}$. Define random variables $D_p$ and $U_p$ as follows.
\begin{itemize}
  \item $D_p$ is the total LP mass of the maximal consecutive run of unselected points
  ending at $p$. If $p$ itself is selected in Phase~I, then $D_p=0$.
  \item $U_p$ is the total LP mass of the maximal consecutive run of unselected points
  immediately above $p$.
\end{itemize}
If $p$ is unselected, then $D_p+U_p$ is exactly the total LP mass of the block containing
$p$. Therefore, $\Prb[A_p]\le \Prb[D_p+U_p\ge 1]$. We now show that the probability that $D_p$ or $U_p$ has mass at least $t$ decreases exponentially.

\begin{lemma}
\label{lem:LpRp}
For every $t\ge 0$, $\Prb[D_p\ge t]\le e^{-t}$ and $\Prb[U_p\ge t]\le e^{-t}$.
Moreover, $D_p$ and $U_p$ are independent.
\end{lemma}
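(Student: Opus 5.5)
Each point on the fixed vertical line $\ell$ lies on its own horizontal line, and Phase~I rounds distinct horizontal lines independently. Points on $\ell$ have distinct $y$-coordinates, so the events $\{q \text{ unselected}\}$ for $q\in\ell$ are mutually independent, and by Lemma~\ref{lem:unif} each has probability $1-x_q$. I would first record this independence explicitly. It is the only input from Phase~I that the proof needs.

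With independence in hand, the independence of $D_p$ and $U_p$ is a structural observation. $D_p$ is a function only of the selection indicators of the points $1,\dots,p$. Indeed, $D_p=0$ if $p$ is selected, and otherwise $D_p$ is the mass of the run obtained by walking down from $p$ until the first selected point. $U_p$ is a function only of the indicators of the points $p+1,\dots,m$: it is the mass of the run obtained by walking up from $p+1$ until the first selected point. Since the two index sets are disjoint and the indicators are independent, $D_p$ and $U_p$ are independent.

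For the tail bound on $D_p$, fix $t>0$ (the case $t=0$ is trivial). Let $j\le p$ be the largest index with $\sum_{q=j}^{p} x_q\ge t$. If no such $j$ exists, $\Prb[D_p\ge t]=0$. Otherwise $D_p\ge t$ holds only if every point in $\{j,\dots,p\}$ is unselected. By independence,
\[
\Prb[D_p\ge t]\le\prod_{q=j}^{p}(1-x_q)\le\prod_{q=j}^{p}e^{-x_q}=\exp\Bigl(-\sum_{q=j}^{p}x_q\Bigr)\le e^{-t},
\]
using $1-x\le e^{-x}$. The bound for $U_p$ is identical, with the run taken upward from $p+1$: $U_p\ge t$ requires all of $p+1,\dots,j'$ to be unselected, where $j'$ is the first index with $\sum_{q=p+1}^{j'}x_q\ge t$.

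The main point to be careful about is the independence claim. Phase~I correlates points on the same horizontal line, because they share the shift $U_h$. The argument must therefore stress that any two distinct points of $\ell$ lie on distinct horizontal lines. The only other subtlety is the definition of ``run'' at the boundary. The event $D_p\ge t$ must be contained in the event that a prefix of mass at least $t$ ending at $p$ is entirely unselected, and the monotone choice of $j$ above handles this.
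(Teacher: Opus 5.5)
Your proposal is correct and follows the paper's proof essentially step for step: the same choice of the largest index $j\le p$, the same product bound via $1-x\le e^{-x}$, and the same observation that $D_p$ and $U_p$ depend on disjoint sets of independent selection indicators. Your explicit argument that distinct points of $\ell$ lie on distinct horizontal lines, whose shifts are independent, simply spells out what the paper asserts in one phrase. (Your closing remark says ``prefix'' where you mean a suffix of points ending at $p$.)
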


\begin{proof}
We prove the claim for $D_p$; the proof for $U_p$ is identical.
Since the claim is trivially true for $t = 0$, we consider $t > 0$. If no suffix of points ending at $p$ has total LP mass at least $t$, then
$\Prb[D_p\ge t]=0$. Otherwise, let $j\le p$ be the largest index such that
$
\sum_{i=j}^{p} x_i \ge t.
$
If $D_p\ge t$, then all points $j,j+1,\dots,p$ must be unselected in Phase~I. Since Phase~I
decisions on a vertical line are independent,
\[
\Prb[D_p\ge t]
\le
\prod_{i=j}^{p}(1-x_i)
\le
\exp\!\left(-\sum_{i=j}^{p}x_i\right)
\le
e^{-t}.
\]

Finally, $D_p$ depends only on the decisions for points at or below $p$, while $U_p$
depends only on the decisions strictly above $p$. These sets of decisions are independent,
so $D_p$ and $U_p$ are independent.
\end{proof}

We require the following technical lemma on stochastic dominance, whose proof is in
Appendix~\ref{app:stochastic-dominance}.

\begin{restatable}{lemma}{StochasticSumDominance}
\label{lem:stochastic_sum_dominance}
Let $X_1,X_2,Y_1,Y_2$ be random variables such that $X_1$ is independent of
$X_2$ and $Y_1$ is independent of $Y_2$. If $X_1\st Y_1$ and $X_2\st Y_2$, then
$X_1+X_2 \st Y_1+Y_2$.
\end{restatable}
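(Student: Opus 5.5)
The plan is to prove the statement by coupling: construct all four random variables on one probability space so that the inequalities hold pointwise, and then read off the dominance of the sums. Recall that $X\st Y$ means $\Prb[X\ge t]\le \Prb[Y\ge t]$ (equivalently $\Prb[X>t]\le\Prb[Y>t]$) for every real $t$. The key tool is the quantile coupling. For a random variable $Z$ with distribution function $F_Z$, define the generalized inverse $F_Z^{-1}(u):=\inf\{z: F_Z(z)\ge u\}$ for $u\in(0,1)$. If $V$ is uniform on $(0,1)$, then $F_Z^{-1}(V)$ has the same law as $Z$. Moreover, $X\st Y$ is equivalent to $F_X(z)\ge F_Y(z)$ for all $z$, and this gives $F_X^{-1}(u)\le F_Y^{-1}(u)$ for every $u$.

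First step: take independent uniform variables $V_1,V_2$ on $(0,1)$ and set
\[
\tilde X_i:=F_{X_i}^{-1}(V_i),\qquad \tilde Y_i:=F_{Y_i}^{-1}(V_i),\qquad i=1,2.
\]
Then $\tilde X_i$ has the law of $X_i$ and $\tilde Y_i$ has the law of $Y_i$. Since $V_1$ and $V_2$ are independent, $\tilde X_1$ is independent of $\tilde X_2$, and $\tilde Y_1$ is independent of $\tilde Y_2$.

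Second step: transfer the laws of the sums. Because $X_1,X_2$ are independent, the law of $X_1+X_2$ is the convolution of the two marginal laws. The pair $(\tilde X_1,\tilde X_2)$ has exactly the same joint law, so $\tilde X_1+\tilde X_2$ has the same law as $X_1+X_2$. The same argument shows that $\tilde Y_1+\tilde Y_2$ has the same law as $Y_1+Y_2$. This is precisely where the independence hypotheses are used; without them the joint laws, and hence the laws of the sums, would not be determined by the marginals.

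Third step: by the monotonicity of quantiles noted above, $\tilde X_i\le \tilde Y_i$ pointwise for $i=1,2$. Hence $\tilde X_1+\tilde X_2\le \tilde Y_1+\tilde Y_2$ pointwise. It follows that for every $t$,
\[
\Prb[X_1+X_2\ge t]=\Prb[\tilde X_1+\tilde X_2\ge t]\le \Prb[\tilde Y_1+\tilde Y_2\ge t]=\Prb[Y_1+Y_2\ge t].
\]

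The main obstacle is justifying the facts about generalized inverses: that $F_Z^{-1}(V)\sim Z$, and that pointwise domination of the CDFs gives pointwise domination of the quantiles. Both follow from the standard equivalence $F_Z^{-1}(u)\le z \iff u\le F_Z(z)$, which relies on the right-continuity of $F_Z$. One also has to be careful about the convention for $\st$, that is, whether it is stated with $\ge t$ or $>t$. Pointwise coupling makes the inequality hold for both conventions, so either form of the definition is covered.

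An alternative would be to condition. Using independence, $\Prb[X_1+X_2>t]=\Ex[\Prb[X_1>t-X_2\mid X_2]]$, which is at most $\Ex[G_{Y_1}(t-X_2)]$, where $G_{Y_1}(s):=\Prb[Y_1>s]$. One would then use the fact that $s\mapsto G_{Y_1}(t-s)$ is nondecreasing, so $X_2\st Y_2$ gives $\Ex[G_{Y_1}(t-X_2)]\le \Ex[G_{Y_1}(t-Y_2)]=\Prb[Y_1+Y_2>t]$. This step needs the expectation characterization of dominance for monotone functions, which is again proved via the coupling, so I would present the coupling argument.
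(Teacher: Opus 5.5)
Your quantile-coupling proof is correct, but it is not the route the paper takes. The paper's proof is essentially the ``alternative'' in your last paragraph. It writes $\Prb[X_1+X_2\ge t]=\Ex[\Prb[X_1\ge t-X_2]]$ using independence of $X_1,X_2$. It then bounds the inner probability by $g(X_2)$, with $g(z):=\Prb[Y_1\ge t-z]$, via $X_1\st Y_1$, and notes that $g$ is bounded and nondecreasing. Next it invokes the characterization ``$A\st B$ implies $\Ex[f(A)]\le\Ex[f(B)]$ for bounded nondecreasing $f$'' to replace $X_2$ by $Y_2$, and finishes with independence of $Y_1,Y_2$.

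Both proofs use the independence hypotheses the same way: to identify the law of each sum from the marginals. The difference is where the monotonicity enters. The paper's version is shorter but treats the monotone-expectation characterization as a black box. Your version is self-contained modulo the standard quantile facts, and it proves more: you obtain versions of the sums with $\tilde X_1+\tilde X_2\le\tilde Y_1+\tilde Y_2$ pointwise. This extends verbatim to any number of independent summands, or to any coordinatewise nondecreasing function of the variables, with no induction.

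One small correction to your closing remark: the characterization needed for the conditioning route does not require a coupling. Take bounded nondecreasing $f$ with $c:=\inf f$. Every set $\{z: f(z)>c+s\}$ is empty, all of $\R$, or a half-line $[a,\infty)$ or $(a,\infty)$. So the layer-cake formula $\Ex[f(A)]=c+\int_0^\infty\Prb[f(A)>c+s]\,ds$ reduces it directly to the tail inequalities, in both the $\ge$ and $>$ forms, whose equivalence you already noted.
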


Let $E_1,E_2\sim \mathrm{Exp}(1)$ be independent exponential random variables. The tail
bound above implies $D_p\st E_1$ and $U_p\st E_2$. By
\Cref{lem:stochastic_sum_dominance}, we have
$D_p+U_p \st E_1+E_2$. Therefore,
\[
\Prb[D_p+U_p\ge 1]
\le
\Prb[E_1+E_2\ge 1].
\]
Since $E_1+E_2$ has the $\Gamma(2,1)$ distribution,
\[
\Prb[E_1+E_2\ge 1] = e^{-1}(1+1)=\frac{2}{e}.
\]
Hence $\Prb[A_p]\le 2/e$ for every point $p$, and therefore
\[
\Ex[\cost(\text{Phase~2})]
\le
\frac{2}{e}\sum_{p\in P} w(p)x_p
=
\frac{2}{e}\LP.
\]

\MainResult*
\begin{proof}
Combining the bound on Phase~I from Lemma~\ref{lem:ph1}, and the bound on $A_p$ above,
we obtain
\[
\Ex[\cost(\text{output})]
\le
\LP + \frac{2}{e}\LP
=
\left(1+\frac{2}{e}\right)\LP
\le
\left(1+\frac{2}{e}\right)\OPT.
\qedhere
\]
\end{proof}

\subsection{Unweighted Hitting Set}
\label{sec:unweighted}
Fix a vertical line $\ell$ containing at least one vertical segment in $\cV$. Let the points on this line be $q_1,\ldots, q_m$ ordered bottom to top.
Let $t$ denote the highest point such that 
the sum of the LP values of the points above $t$ is at least  $1$. That is,
$t = \max\left\{j\in\{1,\dots,m\} : \sum_{i=j}^{m} x_i \ge 1\right\}$.
Since the line contains at least one segment, the LP value on $\ell$ is $\geq 1$. Therefore, we can assume that $t$ is defined.
In the unweighted setting, we can improve the analysis by observing that the value
of an optimal solution for a block $B$ is at most $\lfloor\sum_{p\in B}x_p\rfloor$.

For ease of analysis, we 
introduce a sentinel point $0$ immediately below point $1$, and declare it always
selected. For every $i\in\{0,1,\dots,m\}$, let $B(i)$ denote the maximal block of
consecutive unselected points immediately above $i$, and define
\[
Y_i := \lfloor x(B(i)) \rfloor,
\]
where $x(B(i)):=\sum_{p\in B(i)}x_p$. Every block on the line is either $B(0)$ or a block
$B(i)$ following a selected point $i\ge 1$. Thus the Phase~II cost on this line is at most
\[
Y_0 + \sum_{i=1}^{m} \mathbf{1}[\text{$i$ selected}]\,Y_i,
\]
where $\mathbf{1}[\text{$i$ selected}]$ is the indicator random variable that is 1 if
$i$ is selected, and $0$ otherwise. 
Moreover, if $i\ge t$, then $\sum_{j=i+1}^{m}x_j<1$, so $Y_i=0$. Therefore only
$Y_0,Y_1,\dots,Y_{t-1}$ matter.

\begin{lemma}
\label{lem:yi}
For every $i\in\{0,1,\dots,t-1\}$,
\[
\Ex[Y_i \mid \text{$i$ is selected}] \le \frac{1}{e-1},
\]
where for $i=0$ the conditioning is vacuous.
\end{lemma}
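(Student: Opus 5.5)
We bound $\Ex[Y_i\mid i\text{ selected}]$ with the tail-sum formula $\Ex[Y_i]=\sum_{k\ge 1}\Prb[Y_i\ge k]$, valid since $Y_i$ takes values in the nonnegative integers. We then show that, conditioned on $i$ being selected, $\Prb[Y_i\ge k]\le e^{-k}$. This gives
\[
\Ex[Y_i\mid i\text{ selected}]\le\sum_{k\ge1}e^{-k}=\frac{1}{e-1}.
\]

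\textbf{Independence and the tail event.} Phase~I decisions for distinct points on the vertical line $\ell$ are independent, because they come from different horizontal lines with independent shifts. Hence conditioning on ``$i$ is selected'' does not affect the decisions for the points $i+1,\dots,m$. Moreover, $B(i)$ is determined by those decisions alone: it is the maximal run of unselected points starting at $i+1$. So the conditional law of $Y_i$ equals the unconditional law of $\lfloor U_i\rfloor$, where $U_i$ is the LP mass of the maximal unselected run starting at $i+1$. This is exactly the variable $U_p$ of Lemma~\ref{lem:LpRp} with $p=i$; for $i=0$ the sentinel gives the same description. Now $Y_i\ge k$ holds if and only if $U_i\ge k$, since $k$ is an integer.

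\textbf{Applying the tail bound.} Lemma~\ref{lem:LpRp} gives $\Prb[U_i\ge k]\le e^{-k}$. To be careful with the conditioning, I would redo its short argument directly. Let $j$ be the smallest index with $\sum_{r=i+1}^{j}x_r\ge k$; if no such $j$ exists, the probability is $0$. Then $U_i\ge k$ requires all of $i+1,\dots,j$ to be unselected, which has probability
\[
\prod_{r=i+1}^{j}(1-x_r)\le \exp\Bigl(-\sum_{r=i+1}^{j}x_r\Bigr)\le e^{-k},
\]
using $\Prb[r\text{ unselected}]=1-x_r$ from Lemma~\ref{lem:unif}. Summing the tail over $k\ge1$ finishes the proof.

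The only step needing care is the independence claim, i.e.\ that the points on a single vertical line lie on distinct horizontal lines and are therefore rounded independently. Everything else is a direct reuse of Lemma~\ref{lem:LpRp}. The restriction $i\le t-1$ plays no role in the bound itself; it only serves to make the statement meaningful.
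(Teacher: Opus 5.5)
Your proof is correct and follows essentially the same route as the paper. Both arguments use the tail-sum formula over integer thresholds and bound each tail $\Prb[Y_i\ge k\mid i\text{ selected}]$ by $\prod_{r=i+1}^{j}(1-x_r)\le e^{-k}$, where $i+1,\dots,j$ is the shortest run of points above $i$ with LP mass at least $k$, after using independence of the rounding on distinct horizontal lines to drop the conditioning. Your use of $\lfloor u\rfloor\ge k\iff u\ge k$ for integer $k$ turns the paper's inequality step into an equality, and your remark that the restriction $i\le t-1$ is not needed for the bound is correct.
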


\begin{proof}
Fix $i\in\{0,1,\dots,t-1\}$. For every integer $k\ge 1$,
\[
\Prb\bigl[x(B(i))\ge k \mid \text{$i$ is selected}\bigr] \le e^{-k}.
\]
Indeed, if the above probability is non-zero, let $r>i$ be the smallest index such that$\sum_{j=i+1}^{r}x_j \ge k$. Then all points $i+1,\dots,r$ must not be selected in Phase~I. Since the rounding decisions
on a vertical line are independent and the event that $i$ is selected depends only on
point $i$, conditioning on the selection of $i$ does not affect the points above it. Thus
\[
\Prb\bigl[x(B(i))\ge k \mid \text{$i$ is selected}\bigr]
\le
\prod_{j=i+1}^{r}(1-x_j)
\le
\exp\!\left(-\sum_{j=i+1}^{r}x_j\right)
\le
e^{-k}.
\]
Using the tail-sum formula\footnote{If $X$ is a non-negative integer-valued random variable ($X \in \{0, 1, 2, \dots\}$), the expected value is: $\Ex[X] = \sum_{n=1}^{\infty} \Prb(X \ge n)$},
\[
\begin{aligned}
\Ex[Y_i \mid \text{$i$ is selected}]
&=
\sum_{k\ge 1}\Prb\bigl[Y_i\ge k \mid \text{$i$ is selected}\bigr] \\
&\le
\sum_{k\ge 1}\Prb\bigl[x(B(i))\ge k \mid \text{$i$ is selected}\bigr] \\
&\le
\sum_{k\ge 1}e^{-k}
=
\frac{1}{e-1}.
\end{aligned}
\]
\end{proof}

\begin{lemma}
For every vertical line,
\[
\Ex[\cost(\text{Phase~II on this line})]
\le
\frac{1}{e-1}\sum_{i=1}^{m}x_i.
\]
\end{lemma}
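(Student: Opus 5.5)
The plan is to take expectations in the pointwise bound derived just before \Cref{lem:yi}, use \Cref{lem:unif} to evaluate the selection probabilities, and then absorb the extra $Y_0$ term using the LP mass above $t$.

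First, the trivial case. If the vertical line carries no segment of $\cV$, Phase~II buys nothing on it, so its cost is $0$ and the bound holds. So assume the line contains a vertical segment; then $t$ is well defined.

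Next, the pointwise bound. Each block $B$ can be solved optimally at cost at most $\lfloor x(B)\rfloor$, since $x$ restricted to $B$ is feasible for the totally unimodular interval instance. Every block on the line is $B(0)$ or $B(i)$ for some selected $i\ge 1$. Hence the Phase~II cost on this line is at most
\[
Y_0+\sum_{i=1}^{m}\mathbf{1}[i\text{ selected}]\,Y_i .
\]
We already saw that $Y_i=0$ for $i\ge t$, so this sum can be truncated at $t-1$.

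Now take expectations. By linearity and conditioning,
\[
\Ex\Bigl[\mathbf{1}[i\text{ selected}]\,Y_i\Bigr]=\Prb[i\text{ selected}]\cdot\Ex[Y_i\mid i\text{ selected}].
\]
By \Cref{lem:unif}, $\Prb[i\text{ selected}]=x_i$. By \Cref{lem:yi}, the conditional expectation is at most $1/(e-1)$. The same lemma with vacuous conditioning gives $\Ex[Y_0]\le 1/(e-1)$. Combining these,
\[
\Ex[\cost(\text{Phase~II on this line})]\le \frac{1}{e-1}\Bigl(1+\sum_{i=1}^{t-1}x_i\Bigr).
\]

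The last step, and the only point needing care, is absorbing the additive $1$ coming from the sentinel block $B(0)$. By the definition of $t$, we have $\sum_{i=t}^{m}x_i\ge 1$. Therefore
\[
1+\sum_{i=1}^{t-1}x_i\le \sum_{i=t}^{m}x_i+\sum_{i=1}^{t-1}x_i=\sum_{i=1}^{m}x_i,
\]
which yields the claimed bound. The truncation at $t-1$ is exactly what frees up the LP mass of points $t,\dots,m$ to pay for $Y_0$.
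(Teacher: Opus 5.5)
Your proposal is correct and follows the paper's proof essentially step for step: you take expectations of the block-decomposition bound, bound each term using \Cref{lem:unif} and \Cref{lem:yi}, and absorb the sentinel's additive $1$ via $\sum_{i=t}^{m}x_i\ge 1$. Your explicit treatment of lines carrying no vertical segment, and your justification of the $\lfloor x(B)\rfloor$ block bound via total unimodularity, are harmless additions that the paper leaves implicit.
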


\begin{proof}
Using the block decomposition above,
\[
\Ex[\cost(\text{Phase~II on this line})]
\le
\Ex[Y_0] + \sum_{i=1}^{t-1}\Prb[\text{$i$ selected}]\,\Ex[Y_i\mid \text{$i$ selected}].
\]
By Lemma~\ref{lem:yi}, every conditional expectation is at most $1/(e-1)$, and point $i$
is selected with probability $x_i$. Hence
\[
\Ex[\cost(\text{Phase~II on this line})]
\le
\frac{1}{e-1} + \sum_{i=1}^{t-1}\frac{x_i}{e-1}
=
\frac{1+\sum_{i=1}^{t-1}x_i}{e-1}.
\]
By the definition of $t$, we have $\sum_{i=t}^{m}x_i\ge 1$, so
\[
1+\sum_{i=1}^{t-1}x_i \le \sum_{i=1}^{m}x_i.
\]
Substituting this inequality completes the proof.
\end{proof}

\UnWeightedResult*
\begin{proof}
Phase~I still has expected cost exactly $\LP=\sum_{p\in P}x_p$. Summing the previous line
bound over all vertical lines gives
\[
\Ex[\cost(\text{Phase~2})] \le \frac{1}{e-1}\LP.
\]
Therefore
\[
\Ex[\cost(\text{output})]
\le
\LP + \frac{1}{e-1}\LP
=
\left(1+\frac{1}{e-1}\right)\LP
\le
\left(1+\frac{1}{e-1}\right)\OPT.
\qedhere
\]
\end{proof}

The analysis above is asymptotically tight; see
Appendix~\ref{app:unweighted-tightness}.

\subsection{Horizontal Segments and Vertical Lines}
\label{sec:lines}
In this section, we show that the approximation factor can be improved when in one of the directions, we have
lines instead of line segments.

    \LineResult*
\begin{proof}
    Each vertical line $\ell$ that is not hit by the Phase~I points contributes $\min_{p \in \ell} w(p) \leq \sum_{p \in \ell} w(p) x_p$ to the Phase~II cost, and each vertical line that is hit by the Phase~I points contributes $0$. Thus,
    \[
    \Ex[\cost(\text{Phase~II})] \le \sum_{\ell \in \cV} \Prb[\ell \text{ is not hit by Phase~I}] \sum_{p \in \ell} w(p) x_p. 
    \]
    Now, since $\sum_{p \in \ell} x_p \geq 1$, we have that $\Prb[\ell \text{ is not hit by Phase~I}] \leq \frac{1}{e}$, and hence the total expected cost of Phase~II is upper bounded by $\frac{\LP}{e}$. Thus, from~\Cref{lem:ph1}, we get that the total expected cost is at most $(1 + 1/e)\LP$. \qedhere
\end{proof}

\section{An Explicit Integrality-Gap Construction}
\label{sec:gap}
In this section, we present an instance on which the integrality gap of the standard LP for the unweighted problem is $1.25$.

\begin{figure}[h!]
\centering
\begin{tikzpicture}[scale=1.5,
  point/.style={circle, fill=black, inner sep=1.5pt},
  labeltext/.style={font=\scriptsize, fill=white, inner sep=1pt},
  hsegment/.style={thick, blue},
  vsegment/.style={thick, red}]

\foreach \x/\n in {0/0, 1/1, 2/2, 3/3, 4/4, 5/5} {
    \node[point] (\n) at (\x, 3) {};
    \node[labeltext] at (\x, 3.42) {\n};
}
\foreach \x/\n in {0/10, 1/11, 2/12, 3/13, 4/14, 5/15} {
    \node[point] (\n) at (\x, 0) {};
    \node[labeltext] at (\x, -0.42) {\n};
}

\node[point] (6) at (2, 2) {};
\node[labeltext] at (2.23, 2.26) {6};
\node[point] (7) at (4, 2) {};
\node[labeltext] at (4.23, 2.26) {7};
\node[point] (8) at (1, 1) {};
\node[labeltext] at (0.77, 1.24) {8};
\node[point] (9) at (3, 1) {};
\node[labeltext] at (3.23, 1.24) {9};

\draw[hsegment] (-0.15, 3.05) -- (1.15, 3.05); 
\draw[hsegment] (0.85, 2.95) -- (2.15, 2.95);  
\draw[hsegment] (1.85, 3.05) -- (3.15, 3.05);  
\draw[hsegment] (2.85, 2.95) -- (4.15, 2.95);  
\draw[hsegment] (3.85, 3.05) -- (5.15, 3.05);  

\draw[hsegment] (-0.15, 0.05) -- (1.15, 0.05); 
\draw[hsegment] (0.85, -0.05) -- (2.15, -0.05); 
\draw[hsegment] (1.85, 0.05) -- (3.15, 0.05);  
\draw[hsegment] (2.85, -0.05) -- (4.15, -0.05); 
\draw[hsegment] (3.85, 0.05) -- (5.15, 0.05);  

\draw[hsegment] (1.85, 2.05) -- (4.15, 2.05);  
\draw[hsegment] (0.85, 1.05) -- (3.15, 1.05);  

\draw[vsegment] (-0.05, 3.15) -- (-0.05, -0.15); 
\draw[vsegment] (0.95, 3.15) -- (0.95, 0.85);    
\draw[vsegment] (1.05, 1.15) -- (1.05, -0.15);   
\draw[vsegment] (1.95, 3.15) -- (1.95, 1.85);    
\draw[vsegment] (2.05, 2.15) -- (2.05, -0.15);   
\draw[vsegment] (2.95, 3.15) -- (2.95, 0.85);    
\draw[vsegment] (3.05, 1.15) -- (3.05, -0.15);   
\draw[vsegment] (3.95, 3.15) -- (3.95, 1.85);    
\draw[vsegment] (4.05, 2.15) -- (4.05, -0.15);   
\draw[vsegment] (4.95, 3.15) -- (4.95, -0.15);   

\end{tikzpicture}
\caption{An integrality-gap instance for unweighted axis-parallel segment hitting.  Segments are drawn with slight offsets for readability. The blue segments denote the horizontal segments $\cH$, and the red segments denote the vertical segments $\cV$.}
\label{fig:gap}
\end{figure}

Let $P=\{0,1,\dots,15\}$ be the candidate points shown in \Cref{fig:gap}. Segments in the instance are formed by pairs of points that are on the same horizontal or vertical level, and have no other point on the segment connecting them. Thus, along the top chain we include every segment corresponding to consecutive pairs of points from $\{0,1,2,3,4,5\}$, along the bottom chain every segment corresponding to consecutive pairs of points from $\{10,11,12,13,14,15\}$, and so on.

\begin{theorem}
\label{thm:gap}
 The instance of \Cref{fig:gap} has integrality gap exactly $5/4$ for the natural LP
relaxation.
\end{theorem}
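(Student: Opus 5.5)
The plan is to compute both sides exactly: I will show $\LP=8$ and $\OPT=10$, which gives the ratio $10/8=5/4$. First I fix the segment list. The top chain has the five segments $\{i,i+1\}$ for $0\le i\le 4$, and the bottom chain has the five segments $\{10+i,11+i\}$. There are two middle horizontals, $\{6,7\}$ and $\{8,9\}$. The verticals are $\{0,10\}$ and $\{5,15\}$, together with, for each column $x\in\{1,2,3,4\}$ with middle point $m_x$ (namely $m_1=8,m_2=6,m_3=9,m_4=7$), the two segments $\{x,m_x\}$ and $\{m_x,10+x\}$.

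\textbf{The LP value.} Setting $x_p=1/2$ for all sixteen points is feasible, since every segment contains exactly two candidate points; its value is $8$. For the matching lower bound I will exhibit eight pairwise point-disjoint segments:
\[
\{0,10\},\{5,15\},\{1,8\},\{2,6\},\{3,9\},\{4,7\},\{11,12\},\{13,14\}.
\]
These partition $P$, so summing their LP constraints gives $\sum_p x_p\ge 8$. Equivalently, setting the dual variables of these eight segments to $1$ is dual feasible. Hence $\LP=8$.

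\textbf{An integral solution of size $10$.} The top chain is a path on six vertices, so it needs at least $3$ points, and likewise for the bottom chain. One explicit hitting set is $\{0,2,4\}\cup\{11,13,15\}\cup\{6,7,8,9\}$. I will check it segment by segment. The chains are covered by the alternating choices. The segments $\{0,10\}$ and $\{5,15\}$ are hit by $0$ and $15$. Every column segment contains its middle point. If needed, I will instead search for a cleverer set of size $10$; the figure suggests this count is right.

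\textbf{Lower bound $\OPT\ge 10$ (the main obstacle).} Suppose $Q$ is a hitting set with $|Q|\le 9$. Write $a=|Q\cap\{0,\dots,5\}|\ge 3$, $b=|Q\cap\{10,\dots,15\}|\ge 3$ and $c=|Q\cap\{6,7,8,9\}|$. The segments $\{6,7\}$ and $\{8,9\}$ force $c\ge 2$.

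For each column $x$ whose middle point is not chosen, both $x$ and $10+x$ must lie in $Q$. The columns $x=0$ and $x=5$ also each need an endpoint. I will enumerate the minimum vertex covers of the $6$-path, which are
\[
\{1,3,5\},\ \{0,2,4\},\ \{1,2,4\},\ \{1,3,4\}.
\]
I will then do a case split on $a+b\in\{6,7\}$, which forces $c\le 3$.

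If $c\le 3$, the set $Q\cap\{6,7,8,9\}$ omits one of $6,7$ or one of $8,9$. The two unchosen middle points then sit in columns $x\ne x'$, and these demand that $x,x'$ lie in the top cover and $10+x,10+x'$ lie in the bottom cover. I also need the boundary columns $0$ and $5$ covered at the top or the bottom.

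When $a=b=3$, I will check that no pair of minimum covers meets all of these demands simultaneously. The reason is that each 3-cover of the path contains neither both endpoints $0,5$ nor certain required interior pairs. When $a+b=7$ we have $c=2$, so exactly one middle point of each horizontal pair is missing. I will again check the finitely many $4$-covers of one chain against these demands. This finite check is the step I expect to be fiddly. As a fallback, I will argue the claim via a parity or counting argument over columns, or simply note that it can be verified by exhaustive search over all $2^{16}$ subsets.
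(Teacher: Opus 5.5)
Your proposal is correct. It differs from the paper mainly in how it proves $\OPT\ge 10$; your LP certificate also uses a different perfect matching of segments, but that difference is cosmetic.

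\textbf{The paper's route.} The paper passes to the complement, $\OPT=16-\alpha(G)$. It partitions the vertices into the $6$-cycle $\{2,3,9,13,12,6\}$ and the $5$-cycles $\{0,1,8,11,10\}$ and $\{4,5,15,14,7\}$, which gives $\alpha(G)\le 3+2+2$. It then rules out $7$: either alternating triple of the $6$-cycle eliminates three vertices of one $5$-cycle.

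\textbf{Your route.} You work with the hitting set directly, through the counts $(a,b,c)$ and the four minimum covers of the $6$-path. The finite check you defer is short and does go through.
\begin{itemize}
\item \emph{Case $a=b=3$.} Only $\{0,2,4\}$ contains $0$, and only $\{1,3,5\}$ contains $5$. So the segments $\{0,10\}$ and $\{5,15\}$ force the two chain covers to be exactly these two sets, in some order. These share no column, so every middle point must be chosen and $c=4$. This handles $c=2$ and $c=3$ at once. Note that for $c=3$ only one middle point is unchosen, not two as you wrote.
\item \emph{Case $(a,b,c)=(4,3,2)$.} The case $(3,4,2)$ follows by the top--bottom symmetry $i\leftrightarrow 10+i$. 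The two unchosen middle points lie in columns $x\in\{1,3\}$ and $x'\in\{2,4\}$, so the bottom cover must contain $10+x$ and $10+x'$. No $3$-cover of the bottom chain contains both $12$ and $13$. The $3$-covers containing $\{11,12\}$, $\{11,14\}$ or $\{13,14\}$ are $\{11,12,14\}$ or $\{11,13,14\}$, and both miss $10$ and $15$. So the top $4$-set must be exactly $\{0,5,x,x'\}$, which leaves one of $\{1,2\}$, $\{2,3\}$, $\{3,4\}$ unhit.
\end{itemize}
So no exhaustive search over $2^{16}$ subsets is needed.

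\textbf{Comparison.} The paper's odd-cycle packing is more compact, and it locates the source of the gap in the two $5$-cycles. Your argument is more elementary and also reveals the shape of every optimum: the boundary verticals force complementary alternating covers on the two chains, after which all four middle points are needed.
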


\begin{proof}
Our hitting set instance is precisely an instance of vertex cover where the vertices are the points and for each segment, there is an edge between the vertices corresponding to the two points on the segment. Let $G$ denote this graph. Set $x_p:=1/2$ for every point $p\in P$. Every segment in the construction contains at
least two marked points, so this assignment is feasible. Hence the LP value is at most
\[
\sum_{p\in P} x_p = 16\cdot \frac12 = 8.
\]

In fact this bound is tight. The eight pairwise vertex-disjoint segments
\[
\{0,1\},\ \{2,3\},\ \{4,5\},\ \{6,7\},\ \{8,9\},\ \{10,11\},\ \{12,13\},\ \{14,15\}
\]
already force $\sum_{p\in P} x_p \ge 8$, so $\LP=8$. It remains to compute the optimum integral value.  Since this is a vertex-cover instance,
\[
\OPT= |V(G)|-\alpha(G)=16-\alpha(G),
\]
where $\alpha(G)$ is the maximum independent-set size.  The set $\{0,2,4,8,12,14\}$ is independent, so $\alpha(G)\ge6$. For the reverse inequality, consider the 6-cycle $C = \{2,3,9,13,12,6\}$ and the two 5-cycles $L=\{0,1,8,11,10\}$ and $R=\{4,5,15,14,7\}$. Every independent set contains at most 3 vertices from $C$ and at most 2 vertices from each of $L$ and $R$.  If an independent set $I$ has $|I\cap C|\le2$, then immediately $|I|\le 2+2+2=6$. 

Suppose instead that $|I\cap C|=3$.  Then $I\cap C$ must be one of the two alternating triples $\{2,9,12\}$ or $\{3,6,13\}$.
If $I$ contains $\{2,9,12\}$, then the edges $(1,2)$, $(8,9)$, and $(11,12)$ exclude $1,8,11$ from the left 5-cycle.  Only $0$ and $10$ remain available in $L$, and they are adjacent, so $|I\cap L|\le1$.  Since $|I\cap R|\le2$, we get $|I|\le3+1+2=6$. The case of the other alternating triple ($\{3, 6, 13\}$) is symmetric.

Therefore $\alpha(G)=6$, and hence $\OPT=16-6=10$.  The integrality gap is
\[
\frac{\OPT}{\LP}=\frac{10}{8}=\frac54.
\]

\end{proof}

\section{Generalization to \texorpdfstring{$d$}{d} orientations (weighted case)}
\label{sec:orientation}
Suppose segments are allowed to have $d$ distinct orientations
$\theta_1,\theta_2,\dots,\theta_d$, and let
$S^{(r)}$ denote the set of segments of orientation $\theta_r$.
The LP relaxation is unchanged:
\[
\min \sum_{p\in P} w(p) x_p
\qquad
\text{s.t. }\sum_{p\in P\cap s} x_p\ge 1\ \forall s\in \bigcup_{r=1}^d S^{(r)},\quad 0\le x_p\le 1.
\]
Let $x$ be an optimal LP solution with value $\LP$.

\noindent\textbf{Algorithm with a parameter $k$.}
Fix $k\in\{1,2,\dots,d\}$ and call $\theta_1,\dots,\theta_k$ the primary orientations.  For every line of each primary orientation, independently run the same random-start systematic rounding used above, after rotating coordinates along that line.  The output of the primary step is the union of all points selected by these $k$ rounds.  Therefore its expected cost is at most
\[
\sum_{p\in P} w(p)\Prb[p\text{ is selected in at least one primary round}]
\le \sum_{p\in P} w(p)\sum_{t=1}^k x_p
= k \cdot \LP.
\]
All segments in the primary orientations are hit with probability one.

For every remaining orientation $r>k$, we then repair the residual one-dimensional instances on lines of orientation $\theta_r$.  These repairs may be performed independently for all remaining orientations and then unioned with the primary solution.  This can only overcount the final cost, so it is enough to bound the sum of the repair costs orientation by orientation.

\begin{lemma}\label{lem:Ap_d_orient}
Fix a remaining orientation $\theta_r$ and a point $p$ on a line of orientation $\theta_r$.  Let $A_p$ be the event that $p$ survives the primary step and lies in a surviving block of LP mass at least $1$ along that line.  Then
$
\Prb[A_p] \le (k+1)e^{-k}.
$
\end{lemma}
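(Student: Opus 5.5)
Fix the remaining line $\ell$ of orientation $\theta_r$ and order its points $1,\dots,m$, with $p$ among them. I will follow the same route as in \Cref{sec:weighted}. First, the survival probability of a point $q$ on $\ell$ is at most $(1-x_q)^k \le e^{-kx_q}$. The reason is that $q$ lies on at most one line of each primary orientation, and $q$ survives only if none of the $k$ primary rounds selects it; each round selects $q$ with probability exactly $x_q$ by \Cref{lem:unif}. I also need independence of survival across distinct points of $\ell$. Two distinct points of $\ell$ never share a primary line, since a line of a different orientation meets $\ell$ in at most one point. Rounds on distinct lines are independent, so survival events of distinct points on $\ell$ are independent.

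Next, define $D_p$ (the LP mass of the maximal surviving run ending at $p$) and $U_p$ (the LP mass of the surviving run immediately above $p$) exactly as before. The argument of \Cref{lem:LpRp} carries over. If $D_p\ge t$, let $j$ be the largest index with $\sum_{i=j}^p x_i\ge t$. Then all points $j,\dots,p$ survive, and this has probability at most $\prod_{i=j}^p e^{-kx_i}\le e^{-kt}$; the same bound holds for $U_p$. Moreover, $D_p$ and $U_p$ depend on disjoint sets of points of $\ell$, so by the independence above they are independent. Hence $D_p\st E_1$ and $U_p\st E_2$ with $E_1,E_2\sim\mathrm{Exp}(k)$ independent. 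By \Cref{lem:stochastic_sum_dominance}, $D_p+U_p\st E_1+E_2\sim\Gamma(2,k)$, which gives
\[
\Prb[A_p]\le \Prb[D_p+U_p\ge 1]\le \Prb[\Gamma(2,k)\ge 1]=e^{-k}(1+k).
\]

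The main obstacle is the independence claim in the first step. Within a single primary line the systematic rounding is correlated: selections of consecutive points are dependent through the common shift $U_h$. The argument must therefore use that distinct points of $\ell$ lie on distinct lines of each primary orientation. This holds because each primary orientation $\theta_t$ differs from $\theta_r$, so a $\theta_t$-line meets $\ell$ in at most one point. Once this is checked, every remaining step is the earlier computation with rate $1$ replaced by rate $k$.
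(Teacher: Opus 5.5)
Your proposal is correct and follows essentially the same route as the paper's proof: the per-point survival bound $(1-x_q)^k$, independence across distinct points of $\ell$ because they lie on distinct lines of each primary orientation, the suffix tail bound $e^{-kt}$ for $D_p$ and $U_p$, and domination of $D_p+U_p$ by a $\Gamma(2,k)$ variable, which gives $(k+1)e^{-k}$. You also point out that systematic rounding is correlated within a single primary line, so independence must come from the distinctness of the primary lines through points of $\ell$; the paper's proof relies on exactly this point.
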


\begin{proof}
Fix a line $\ell$ of orientation $\theta_r$ and order its points along $\ell$. In one
primary orientation, a point $i$ survives with probability $1-x_i$; hence after all
$k$ primary rounds it survives with probability $(1-x_i)^k$. Moreover, for any primary
orientation $\theta_t$ and any two distinct points on $\ell$, the two $\theta_t$-lines
through these points are distinct; otherwise a $\theta_t$-line would meet $\ell$ in two
points, forcing it to be $\ell$, contrary to $\theta_t\ne\theta_r$. Thus the collections
of primary-line shifts determining the survival of distinct points on $\ell$ are disjoint
and independent, so the survival events of distinct points on $\ell$ are independent.

Now repeat the weighted-case block analysis from the proof of~\Cref{lem:LpRp} with ``survives the primary step'' in place
of ``is unselected in Phase~I''. Let $D_p$ be the LP mass of the maximal surviving run
ending at $p$, and let $U_p$ be the LP mass of the maximal surviving run immediately after
$p$. If $A_p$ occurs, then $D_p+U_p\ge 1$.

For any $t\ge 0$, the same suffix argument as in the proof of~\Cref{lem:LpRp} gives
\[
\Prb[D_p\ge t]\le
\prod_{i\in J}(1-x_i)^k
\le
\exp\!\left(-k\sum_{i\in J}x_i\right)
\le e^{-kt},
\]
where $J$ is a minimal consecutive suffix ending at $p$ with LP mass at least $t$; if no such suffix
exists, the probability is $0$. Similarly, $\Prb[U_p\ge t]\le e^{-kt}$, and $D_p,U_p$
are independent. Thus $D_p$ and $U_p$ are stochastically dominated by independent
exponential random variables $E_1,E_2$ of rate $k$. Therefore,
\[
\Prb[A_p]\le \Prb[D_p+U_p\ge 1]
\le \Prb[E_1+E_2\ge 1]
= (k+1)e^{-k}.
\]
\end{proof}

Thus, we get that for each remaining orientation $r>k$,
\[
\E[\text{repair cost for orientation }\theta_r] \le \sum_{p \in P} x_p w(p) \Prb[A_p]
\le (k+1)e^{-k}\LP.
\]
Summing over the $d-k$ remaining orientations and adding the primary cost yields
\begin{equation}\label{eq:d_orient_master}
\E[\text{total cost}]
\le \bigl(k+(d-k)(k+1)e^{-k}\bigr)\LP.
\end{equation}

\dorient*

\begin{proof}
Take $k= \min(d, \lceil \ln d+\ln\ln d\rceil)$. For large enough $d$, $k$ equals $\lceil \ln d+\ln\ln d\rceil$, and thus $(d-k)(k+1)e^{-k}=O(1)$, yielding the claimed approximation ratio when substituted in~\eqref{eq:d_orient_master}.
\end{proof}

We can further generalize this result to the case when each object to be hit is a union of up to $h$ segments, each of which has one of $d$ orientations.

\begin{corollary}
\label{cor:union_of_segments}
  There exists a randomized algorithm that, given an instance of the weighted Hitting Set problem where each object is a union of at most $h$ segments of $d$ distinct orientations, computes a solution with an expected approximation factor of $h \cdot (\ln d+\ln\ln d+O(1))$ relative to the natural LP.
\end{corollary}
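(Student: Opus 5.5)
The plan is to reduce each union-object to a single segment by LP rescaling, and then invoke Theorem~\ref{thm:d-orient} on the resulting instance.

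Let $x$ be an optimal solution of the natural LP for the union instance. For each object $O=s_1\cup\dots\cup s_{h'}$ with $h'\le h$, the constraint is
\[
\sum_{p\in P\cap O}x_p\ge 1 .
\]
Hence, by averaging, some constituent segment $s(O)\in\{s_1,\dots,s_{h'}\}$ satisfies
\[
\sum_{p\in P\cap s(O)}x_p\ge 1/h .
\]
We fix such a segment for every object; this is computable directly from $x$. Then we define $y_p:=\min(1,h\,x_p)$. This $y$ is feasible for the single-segment LP on the family $\{s(O)\}$, whose orientations are among the $d$ given ones. Its cost is
\[
\sum_p w(p)\,y_p\le h\cdot\LP .
\]
Feasibility holds in both cases: if some $p\in s(O)$ is capped at $1$, the constraint for $s(O)$ holds trivially; otherwise the sum is at least $h\cdot 1/h=1$.

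Next, we apply the algorithm of Theorem~\ref{thm:d-orient} to the segment family $\{s(O)\}$ with LP solution $y$. I will check that its analysis uses only a feasible (not necessarily optimal) fractional solution. This appears to be the case: Lemma~\ref{lem:unif}, Lemma~\ref{lem:hor}, and Lemma~\ref{lem:Ap_d_orient}, together with the bound~\eqref{eq:d_orient_master}, only use $\sum_{p\in s}y_p\ge 1$ and $0\le y_p\le 1$. We must also confirm that the repair step on residual blocks is bounded by the $y$-mass of the block, which needs $y$ restricted to a block to be feasible for that block's intervals. This holds, because $y$ is feasible for each segment and a segment contained in a block has all its points there.

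Combining the two steps, the expected cost is at most
\[
(\ln d+\ln\ln d+O(1))\sum_p w(p)\,y_p\le h(\ln d+\ln\ln d+O(1))\,\LP .
\]
Finally, any hitting set of $\{s(O)\}$ hits every object $O\supseteq s(O)$, so the output is feasible for the original instance.

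The main subtlety is the step where we use Theorem~\ref{thm:d-orient} as a guarantee relative to an \emph{arbitrary} feasible fractional point rather than to the LP optimum; this is a matter of re-reading the proof, not a new argument. A minor point is the capping at $1$, which is needed because systematic rounding requires $y_p\le 1$; it only decreases cost and, as shown above, preserves feasibility.
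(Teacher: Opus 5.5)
Your proposal is correct and follows the paper's proof essentially step for step: pick for each object a constituent segment carrying LP mass at least $1/h$, rescale to $y_p=\min\{1,hx_p\}$, and run the $d$-orientation algorithm on $y$. You point out that this needs the guarantee of Theorem~\ref{thm:d-orient} relative to an arbitrary feasible fractional point, which the paper leaves implicit; you could also avoid that issue by re-solving the natural LP of the chosen segment family, whose optimum is at most $\sum_p w(p)y_p\le h\cdot\LP$.
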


\begin{proof}
 Let $x$ be a feasible solution to the natural LP. Since each object $O$ is a union of at most $h$ segments, the LP constraint $\sum_{p \in O} x_p \ge 1$ implies there is some segment $s \subseteq O$ such that $\sum_{p \in s} x_p \ge 1/h$. Choose one such segment for each object, and define $y_p:=\min\{1,hx_p\}$. Then $0\le y_p\le 1$, $\sum_p w(p)y_p\le h\cdot\LP$, and for every chosen segment $s$ we have $\sum_{p\in s}y_p\ge 1$: if some $p\in s$ has $hx_p\ge 1$ this is immediate, and otherwise $\sum_{p\in s}y_p=h\sum_{p\in s}x_p\ge 1$. Thus $y$ is feasible for the chosen segment instance, and hitting these chosen segments hits every original object. Running the algorithm for $d$ orientations on $y$ yields the claimed expected approximation factor.
\end{proof}

\section{Local Search PTASes for $k$-restricted instances}
\label{sec:ptas}
In this section, we show that if we restrict the input so that each segment intersects
segments of the opposite orientation lying on at most $k$-distinct lines, then we
obtain a PTAS for the unweighted Hitting Set problem. 
The results of Raman and Ray~\cite{RR18} only work if the objects are non-piercing.
While Banik et al.~\cite{BanikICALP26} works for the case where each object has
a bounded amount of piercing, their results only work for the Set Cover and
Independent Set problems.
The restriction on the number
of such intersections is essential, as otherwise, as Section~\ref{sec:apxhardness}
shows, the problem is APX-hard.

\begin{definition}
A family $\cS=\cH\cup\cV$ of axis-parallel segments is \emph{$k$-restricted} if every
segment in $\cH$ intersects segments of $\cV$ lying on at most $k$ parallel lines, 
and every segment in $\cV$
intersects segments of $\cH$ lying on at most $k$ parallel lines.
\end{definition}

Observe that in a $k$-restricted instance, a segment can intersect
an unbounded number of segments of the opposite orientation. The only restriction
is that these segments lie on at most $k$ distinct parallel lines.

The local-search algorithm is standard. Fix an integer parameter $t$. Starting from any
feasible hitting set, repeatedly replace at most $t$ objects of the current solution by a
strictly smaller feasible set, as long as such an improving swap exists. For fixed $t$,
this process terminates in polynomial time. The key issue is to prove that a $t$-locally
optimal solution is globally near-optimal when $t=t(\varepsilon)$ is chosen large enough~\cite{RR18,DBLP:conf/walcom/AschnerKMY13,ChanH12,mustafa2010improved}.

The analysis relies on showing the existence of a suitable \emph{exchange graph} 
$G=(L\cup O, E)$ on a locally optimal solution $L$ and an optimal solution $O$. 
This exchange graph has two properties:
the \emph{local property} that for any $L'\subseteq L$,
$(L'\setminus L)\cup N_G(L')$ is also a hitting set for $\mathcal{S}$, where
$N_G(L')$ are the neighbors of $L'$ in the exchange graph $G$, and
the \emph{separator property} that $G$ comes from a hereditary family 
$\mathcal{G}$ of graphs with sub-linear size separators. 
That is, there is an absolute constant $\epsilon>0$ such that any graph 
$H\in\mathcal{G}$ has a separator of size $O(|V(H)|^{1-\epsilon})$\footnote{A balanced separator, 
or just a separator for a graph $G=(V,E)$ is a subset $S\subseteq V$ 
of vertices whose removal partitions the graph such that no resulting connected component contains more than $\alpha|V|$ vertices, where $\alpha < 1$ is a fixed constant, typically $1/2$ or $2/3$.
A graph admits a sub-linear separator if there exists an absolute constant $\epsilon>0$ such that $|S|=O(|V|^{1-\epsilon})$.
}.

Given two disjoint hitting sets $R$ and $B$ for a set $\mathcal{S}$ of axis-parallel segments, we show
the existence of a suitable exchange graph. A PTAS then follows by standard arguments.

\begin{lemma}
\label{lem:exchange}
Let $\cS=\cH\cup\cV$ be a set of $k$-restricted horizontal and vertical segments and let $R$ and $B$ be two disjoint hitting
sets for the segments $\cS$. Then, there is a local-exchange graph $G=(R\cup B,E)$ that satisfies
the local exchange and the separator properties.
\end{lemma}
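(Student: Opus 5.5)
We construct $G$ in two stages: first a planar ``skeleton'' graph on each line, then a bounded number of extra edges per vertex that handle crossings between orientations, and finally we argue that the resulting graph has sublinear separators.

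\textbf{Local property on each line.} On every horizontal or vertical line $\ell$, the points of $(R\cup B)\cap\ell$ are linearly ordered. We join each pair of consecutive red/blue points of different colours along $\ell$; where consecutive points have the same colour, we join each point to the nearest point of the other colour on each side. Any segment $s\subseteq\ell$ contains a point of $R$ and a point of $B$, so it contains a consecutive bichromatic pair. Hence if the red point of that pair is removed, its blue neighbour on $s$ is added. The line-edges therefore already give the local property for every segment.

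\textbf{Bounding the union of line-graphs.} Each point lies on exactly one horizontal and one vertical line, so it has degree at most $4$ in the union. The union is not planar in general, because an edge on a horizontal line may cross edges on many vertical lines. Here we use $k$-restriction. We truncate each line-edge to the part of the line actually covered by segments; edges spanning gaps not covered by any segment are unnecessary. A horizontal edge then lies inside a union of horizontal segments, and each such segment meets vertical segments on at most $k$ lines. The danger is that a chain of overlapping segments could accumulate many crossings, so we restrict edges further: we keep only those edges that lie inside a single segment, and choose for each segment one bichromatic consecutive pair inside it. Such an edge crosses vertical edges lying on at most $k$ vertical lines. On each of these lines, only edges covering the crossing height can cross it, and by the same truncation these are contained in vertical segments crossing the horizontal segment. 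To bound their number we also use the symmetric restriction, giving at most $O(k)$ crossings per edge after we discard redundant edges, namely by keeping per point and per direction only the shortest witnessing edge.

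\textbf{Separators.} A graph with $O(n)$ edges and at most $O(k)$ crossings per edge is $O(k)$-planar, and $k$-planar graphs have separators of size $O(\sqrt{k n})$. This class is hereditary. The standard local-search analysis \cite{RR18,mustafa2010improved} then applies.

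\textbf{Main obstacle.} The hardest step is bounding the crossings per edge. Overlapping vertical segments on one line could produce many edges crossing at one horizontal edge, and the argument must show that after pruning only $O(1)$ edges per vertical line meet a given horizontal edge. I would first try to prove this by charging each surviving crossing edge to a distinct one of the $k$ lines together with its two nearest bichromatic neighbours around the crossing point.
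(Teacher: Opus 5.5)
\textbf{There is a genuine gap.} Your final construction matches the paper's: for each segment, one red--blue pair of points consecutive along it, drawn along the segment. Your verification of the local-exchange property is also the paper's. The gap is the step you yourself single out: you never show that an edge is crossed $O(k)$ times, and the tools you propose do not close it. ``Truncating'' an edge is not an operation that keeps it an edge between the same two vertices. The rule ``keep per point and per direction only the shortest witnessing edge'' is never checked against the local property, and deleting edges can leave a segment with no red--blue edge inside it. The symmetric restriction bounds nothing about how many edges on one vertical line pass through a given point. The charging scheme is only a plan. Separately, your degree-$4$ claim is false for your stage-one rule: a blue point can receive edges from an arbitrarily long run of red points. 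It becomes true only for the final construction.

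The missing idea is one observation, and it makes all pruning unnecessary. If $r,b\in s$ are consecutive among the points of $R\cup B$ on $s$, then they are consecutive among the points of $R\cup B$ on the whole line, because every point of the line between them lies in $s$. Two consequences follow. The relative interior of every edge contains no point of $R\cup B$. Two edges on a common line have disjoint relative interiors, so edges never overlap even when segments do.

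Now let $e\subseteq s\in\cH$ be a horizontal edge and $f\subseteq s'\in\cV$ a vertical edge crossing it. Since $e\cap f\subseteq s\cap s'$, the segment $s'$ lies on one of the at most $k$ vertical lines met by $s$. On each such line, $e$ meets the line in a single point $q$, and there are two cases.
\begin{itemize}
\item If $q$ is an endpoint of $e$, then $q$ is a vertex, so it lies in the relative interior of no edge. Any edge on that line through $q$ merely shares the vertex $q$ with $e$, so there is no crossing.
\item Otherwise $q\notin R\cup B$, so $q$ lies in the relative interior of at most one edge on that line.
\end{itemize}
Hence $e$ has at most $k$ crossings, using only the restriction on $s$; the symmetric restriction is needed only for vertical edges. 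With $G$ shown to be $k$-planar, your separator step goes through. You cite $O(\sqrt{kn})$ separators for $k$-planar graphs; the paper instead planarizes and applies the weighted planar separator theorem, and either works.
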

\begin{proof}
The graph $G=(R\cup B, E)$ is constructed as follows. Starting with the empty graph on $R\cup B$,
for each segment $s\in\mathcal{S}$, let $R_s$ and $B_s$ denote respectively, the points of $R$ in $s$
and the points of $B$ in $s$. Suppose no point in $R_s$ is adjacent to a point in $B_s$, 
we choose an arbitrary pair of points, one in $R_s$ and the other in $B_s$ 
that are consecutive along $s$ and add the edge $\{r,b\}$
to $G$ as a straight-line segment along $s$. 
We claim that $G$ is the desired local-exchange graph. 
We say that the points $r$ and $s$ are \emph{responsible} for hitting $s$.

Let $R'\subseteq R$. We claim that $T=(R\setminus R')\cup N_G(R')$ is a hitting set. Indeed, for a
segment $s$, if the vertex $r$ responsible for hitting $s$ is not selected in $R'$, then $s$
is still hit by a point in the set $T$. If not, since the vertex $b\in B$ responsible for hitting
$s$ is adjacent to $r$, it implies that $s$ is hit by $b$. Therefore, $G$ satisfies the local-exchange
property.

Consider the natural embedding of $G$ in the plane, where the edges are drawn as straight-line segments
between the points. By construction, each edge in $G$ is only intersected by edges of the opposite orientation.
Since any edge lies along a segment, and the instance is $k$-restricted, it implies that each edge of
$G$ is crossed by at most $k$ other edges. 
That is $G$ is a \emph{$k$-planar graph}\footnote{A graph is $k$-planar if it has an embedding in the plane such that each edge is crossed by at most $k$ other edges}. $k$-planar graphs form a hereditary family as removing vertices
leaves the induced graph $k$-planar.

Since $k$-planar graphs have $O(\sqrt{k}n)$ edges~\cite{DBLP:journals/combinatorica/PachT97}, where $n=|R|+|B|$
and any two segments intersect in at most one point
we obtain a plane graph $H$ with $O(k^{3/2}n)$ vertices, by splitting each pair of crossing edges
by adding a new \emph{splitting vertex} at their intersection point. 
We assign a weight of 1 to each vertex of $G$ and a weight of $0$ to each splitting vertex. 
By the Planar Separator Theorem~\cite{LT79}, there is a balanced separator $S$ of $H$ of size $O(\sqrt{|V(H)|}$
such that both sides of the partition has weight at most $2/3$ the total weight of the vertices in $H$.
To obtain
a separator in $G$, for each vertex in $S$, if a splitting vertex $s$ lies in $S$, we add the four end-points
corresponding to the pair of crossing edges defined by $s$. The resulting set $S'$ is a 
balanced separator in $G$, and has size $O(|S|)$. Therefore, we obtain a sub-linear sized separator in $G$.
\end{proof}
\KRestricted*
\begin{proof}[Proof sketch]
Let $L$ be a $t$-locally optimal Hitting Set and let $O$ be an optimal Hitting Set. 
We can assume that $L\cap O=\emptyset$, as otherwise, we can remove the
points in $L\cap O$ and the segments hit by them, and consider the residual instance.
This can only improve the approximation factor. 
By Lemma~\ref{lem:exchange}, there is an exchange graph satisfying the local-exchange
property. Now, a PTAS follows directly via the techniques in~\cite{mustafa2010improved}.
\end{proof}

\section{APX-hardness}
\label{sec:apxhardness} 
The proof of the following theorem is based on the proof of APX-hardness of Minimum Hitting Set for orthogonal segments by Fekete et al.~\cite{FeketeEtAl2018}.

\Apxhard*
\begin{proof}
We give an approximation-preserving reduction from
\textsc{Max-2SAT(3)}, the maximum satisfiability problem for CNF formulae
with two literals per clause, in which each variable appears in at most three
clauses. This problem is APX-hard~\cite{hastad1999clique}.

Let $\Phi$ be an instance of \textsc{Max-2SAT(3)} with variables
$x_1,\ldots,x_n$ and clauses $C_1,\ldots,C_m$. We may assume that every
variable appears in at least one clause; isolated variables can be deleted.
Since each clause contains two literals, the total number of literal occurrences
is $2m$; therefore, $n \le 2m$.

Let $\OPT$ denote the maximum number of clauses of $\Phi$
that can be satisfied. Also, since a uniformly random truth assignment
satisfies each 2-literal-clause with probability at least $1/2$, we have
\[
        \OPT\ge \frac{m}{2}.
\]

We construct an instance of Minimum Hitting Set as follows. For each variable $x_i$, let $d_i\in\{1,2,3\}$ be the number of occurrences of $x_i$ in $\Phi$. We construct a variable gadget $G_i$ consisting of
$2d_i+4$ geometric objects: $2d_i+2$ horizontal segments and two vertical
lines. For each variable $x_i$, there is vertical line $L_i$ at the left end and there is a vertical line $R_i$ at the right end. These objects are arranged so that their intersection graph is an even cycle of length $2d_i+4$, and no point intersects three objects of the variable
gadget. Equivalently, the only points that hit two objects of $G_i$ are the
consecutive intersections along this cycle. Refer to Figure \ref{fig:max2sat-hs}.

\begin{figure}[ht]
\centering
\begin{tikzpicture}[
    x=1cm,y=1cm,
    line cap=round,line join=round,
    seg/.style={blue!70!black, line width=1.0pt},
    lab/.style={font=\small},
    ptg/.style={circle, fill=green!60!black, inner sep=2.4pt},
    ptr/.style={circle, fill=red, inner sep=2.4pt}
]

\def\Lthree{0.0}
\def\Ltwo{0.8}
\def\Lone{1.6}

\def\Cone{4.2}
\def\Ctwo{6.4}
\def\Cthree{8.6}

\def\Rone{11.0}
\def\Rtwo{11.8}
\def\Rthree{12.6}

\def\Ytop{7.8}
\def\Ybot{0.2}

\def\xoneT{6.5}
\def\xoneB{5.7}

\def\xtwoT{4.7}
\def\xtwoB{3.9}

\def\xthreeT{2.9}
\def\xthreeB{2.1}

\def\dy{0.12}

\node[lab] at (6.3,8.6)
{$\phi=(x_1\vee x_2)\wedge(\bar{x}_1\vee \bar{x}_3)\wedge(x_2\vee \bar{x}_3)$};

\draw[seg] (\Lthree,\Ybot) -- (\Lthree,\Ytop);
\draw[seg] (\Ltwo,\Ybot) -- (\Ltwo,\Ytop);
\draw[seg] (\Lone,\Ybot) -- (\Lone,\Ytop);

\draw[seg] (\Cone,\Ybot+0.4) -- (\Cone,\Ytop-0.6);
\draw[seg] (\Ctwo,\Ybot+0.4) -- (\Ctwo,\Ytop-0.6);
\draw[seg] (\Cthree,\Ybot+0.4) -- (\Cthree,\Ytop-0.6);

\draw[seg] (\Rone,\Ybot) -- (\Rone,\Ytop);
\draw[seg] (\Rtwo,\Ybot) -- (\Rtwo,\Ytop);
\draw[seg] (\Rthree,\Ybot) -- (\Rthree,\Ytop);

\node[lab, above] at (\Lthree,\Ytop) {$L_3$};
\node[lab, above] at (\Ltwo,\Ytop) {$L_2$};
\node[lab, above] at (\Lone,\Ytop) {$L_1$};

\node[lab, above] at (\Cone,\Ytop-0.6) {$C_1$};
\node[lab, above] at (\Ctwo,\Ytop-0.6) {$C_2$};
\node[lab, above] at (\Cthree,\Ytop-0.6) {$C_3$};

\node[lab, above] at (\Rone,\Ytop) {$R_1$};
\node[lab, above] at (\Rtwo,\Ytop) {$R_2$};
\node[lab, above] at (\Rthree,\Ytop) {$R_3$};

\node[lab] at ({0.25*\Ltwo+0.75*\Lone},6.1) {$x_1$};
\node[lab] at ({0.25*\Lthree+0.75*\Ltwo},4.1) {$x_2$};
\node[lab] at (\Lthree-0.35,2.3) {$x_3$};

\draw[seg] (\Lone,\xoneT-\dy) -- (2.8,\xoneT-\dy);
\draw[seg] (2.8,\xoneT+\dy) -- (\Cone,\xoneT+\dy);
\draw[seg] (\Cone,\xoneT-\dy) -- (\Ctwo,\xoneT-\dy);
\draw[seg] (\Ctwo,\xoneT+\dy) -- (9.4,\xoneT+\dy);
\draw[seg] (9.4,\xoneT-\dy) -- (\Rone,\xoneT-\dy);

\draw[seg] (\Lone,\xoneB) -- (\Rone,\xoneB);

\node[ptg] at (\Lone,\xoneT) {};
\node[ptr] at (2.8,\xoneT) {};
\node[ptg] at (\Cone,\xoneT) {};
\node[ptr] at (\Ctwo,\xoneT) {};
\node[ptg] at (9.4,\xoneT) {};
\node[ptr] at (\Rone,\xoneT) {};

\node[ptr] at (\Lone,\xoneB) {};
\node[ptg] at (\Rone,\xoneB) {};

\draw[seg] (\Ltwo,\xtwoT-\dy) -- (2.6,\xtwoT-\dy);
\draw[seg] (2.6,\xtwoT+\dy) -- (\Cone,\xtwoT+\dy);
\draw[seg] (\Cone,\xtwoT-\dy) -- (5.8,\xtwoT-\dy);
\draw[seg] (5.8,\xtwoT+\dy) -- (\Cthree,\xtwoT+\dy);
\draw[seg] (\Cthree,\xtwoT-\dy) -- (\Rtwo,\xtwoT-\dy);

\draw[seg] (\Ltwo,\xtwoB) -- (\Rtwo,\xtwoB);

\node[ptg] at (\Ltwo,\xtwoT) {};
\node[ptr] at (2.6,\xtwoT) {};
\node[ptg] at (\Cone,\xtwoT) {};
\node[ptr] at (5.8,\xtwoT) {};
\node[ptg] at (\Cthree,\xtwoT) {};
\node[ptr] at (\Rtwo,\xtwoT) {};

\node[ptr] at (\Ltwo,\xtwoB) {};
\node[ptg] at (\Rtwo,\xtwoB) {};

\draw[seg] (\Lthree,\xthreeT-\dy) -- (\Ctwo,\xthreeT-\dy);
\draw[seg] (\Ctwo,\xthreeT+\dy) -- (7.4,\xthreeT+\dy);
\draw[seg] (7.4,\xthreeT-\dy) -- (\Cthree,\xthreeT-\dy);
\draw[seg] (\Cthree,\xthreeT+\dy) -- (9.8,\xthreeT+\dy);
\draw[seg] (9.8,\xthreeT-\dy) -- (\Rthree,\xthreeT-\dy);

\draw[seg] (\Lthree,\xthreeB) -- (\Rthree,\xthreeB);

\node[ptg] at (\Lthree,\xthreeT) {};
\node[ptr] at (\Ctwo,\xthreeT) {};
\node[ptg] at (7.4,\xthreeT) {};
\node[ptr] at (\Cthree,\xthreeT) {};
\node[ptg] at (9.8,\xthreeT) {};
\node[ptr] at (\Rthree,\xthreeT) {};

\node[ptr] at (\Lthree,\xthreeB) {};
\node[ptg] at (\Rthree,\xthreeB) {};

\end{tikzpicture}
\caption{A hitting-set instance corresponding to
$\phi=(x_1\vee x_2)\wedge(\bar{x}_1\vee \bar{x}_3)\wedge(x_2\vee \bar{x}_3)$.}
\label{fig:max2sat-hs}
\end{figure}

Color the consecutive intersection points of this cycle alternately red and
green. Since $G_i$ has $2d_i+4$ objects and every point hits at most two
objects of $G_i$, any hitting set for $G_i$ has size at least $b_i := d_{i}+2$.

Moreover, in a hitting set of size exactly $b_i$, every selected point must lie at an intersection of two consecutive objects of $G_i$. Therefore, it corresponds to a perfect matching of the cycle. Thus there are exactly two canonical optimum hitting sets for $G_i$:
the set of all red points and the set of all green points. We interpret the
green solution as setting $x_i=\mathrm{true}$ and the red solution as setting
$x_i=\mathrm{false}$.

For each clause $C_j$, introduce a vertical line $\ell_j$. If the literal
$x_i$ occurs in $C_j$, we make $\ell_j$ pass through the green point of
$G_i$ reserved for this occurrence. If the literal $\overline{x_i}$ occurs in
$C_j$, we make $\ell_j$ pass through the corresponding red point of $G_i$.
The lines and input points are placed such that these are the only points
at which a clause line passes through. Thus, each clause line passes through exactly two input points.
Let $ B := \sum_{i=1}^n b_i$.
Since $\sum_i d_i = 2m$, we get
$B= 6m$.

We first relate satisfying assignments to hitting sets. Given an optimal truth
assignment satisfying $\OPT$ clauses, choose in each variable gadget the canonical
red or green hitting set corresponding to the truth value of the variable.
This uses exactly $B$ points. Every satisfied clause line is hit at the
corresponding literal point. Each unsatisfied clause line can be hit by one
additional point placed anywhere on that line. Therefore, if $k$ denotes the
optimum solution size of the constructed hitting-set instance, then
$k \le B + (m-\OPT)$. In particular,
$k \le B+m \le 7m$.

Conversely, let $H$ be any hitting set for the constructed instance. We
normalize $H$ without increasing its cardinality. Consider a variable gadget
$G_i$. Since $G_i$ has $2b_i$ objects and no point hits more than two objects
of $G_i$, the points of $H$ used to hit objects of $G_i$ have cardinality at
least $b_i$.

If the selected points of $H$ for the variable gadget $G_i$ are consistent, that is, all are red or all are green, then we replace the points of $H$ used
inside $G_i$ by the corresponding canonical optimum set of $b_i$ red or
green points. This does not increase the number of points.

If the selected literal points are inconsistent, then $H$ uses literal points of both colors. Since the two canonical solutions are the only size-$b_i$ hitting
sets for $G_i$, the portion of $H$ used for $G_i$ has size at least $b_i+1$.
We replace it by one canonical optimum set of size $b_i$, choosing the
majority color among the selected literal points. Since $d_i\le 3$, at most one
clause line that was previously hit through a minority-colored literal point
becomes unhit. We use the one saved point to hit that clause line separately.
Again, the total number of points does not increase.

Applying this operation independently to all variable gadgets, we obtain a
hitting set of the form $H_v \cup H_c$,
with $|H_v| = B$ and $|H_v|+|H_c| \le |H|$.

The set $H_v$ is canonical in every variable gadget, and hence defines a truth
assignment. The set $H_c$ consists of additional points placed on clause lines.
Every clause not satisfied by the assignment induced by $H_v$ must be hit by
a point of $H_c$. Therefore, if $\operatorname{alg}$ is the number of clauses satisfied by the induced assignment, then
\[
        \operatorname{alg} \ge m-|H_c|.
\]

Now suppose there is a polynomial-time $(1+\varepsilon)$-approximation
algorithm for the hitting-set problem on horizontal segments and vertical
lines. Applied to the constructed instance, it returns a hitting set $H$ with
$|H| \le (1+\varepsilon)k$, and
after the normalization above,
$|H_c| \le |H|-B \le (1+\varepsilon)k-B$.
Hence,
\[
\begin{aligned}
        \operatorname{alg}
        &\ge m-|H_c| 
        &\ge m-(1+\varepsilon)k+B 
        &= (m+B-k)-\varepsilon k .
\end{aligned}
\]
Since $k\le B+(m-\OPT)$, we have
$m+B-k \ge \OPT$,
Thus, $\operatorname{alg}
        \ge \OPT-\varepsilon k$. 
Using $k\le 7m$ and $\OPT\ge m/2$, we obtain
\[
        \varepsilon k
        \le 7\varepsilon m
        \le 14\OPT.
\]
Therefore,
\[
        \operatorname{alg}
        \ge \left(1-14\varepsilon\right)\OPT.
\]
Consequently, a PTAS for minimum hitting set with horizontal segments and
vertical lines would imply a PTAS for \textsc{Max-2SAT(3)}, contradicting its
APX-hardness. Hence the hitting-set problem for horizontal segments and
vertical lines is APX-hard.
\end{proof}

\section{Conclusion}
\label{sec:conclusion}
We gave a short LP-rounding framework for hitting axis-parallel segments with weighted
points. The algorithm improves the immediate factor-$2$ bound by rounding one orientation
exactly and repairing the residual subproblem on the other orientation using exact
one-dimensional optimization. This yields a deterministic $(1+2/e)$-approximation for the
weighted problem and a deterministic $(1+1/(e-1))$-approximation in the unweighted case.
For bounded-complexity subclasses, local search yields a PTAS for the Hitting Set problem, 
and we also extend our results for segments in $d$ orientations.

Two natural directions remain open. First, it would be interesting to determine whether
the analysis can be strengthened further, or whether these bounds are close to the true
integrality behavior of the natural LP. For the Set Cover and Independent Set problems,
the best known approximation is the straightforward factor-$2$ bound.
The integrality gap of the natural LP-formulation for 
Independent Set was shown to be $2$~\cite{DBLP:journals/jocg/CaoduroCPW23}. Hence, breaking the factor-2
for the independent set problem cannot go via rounding the natural LP and new ideas are required.
Finally, the approach suggests exploring whether
similar one-orientation-first rounding schemes can be extended to richer families such as thin rectangles.
\bibliographystyle{plain}
\bibliography{ref}
\newpage

\appendix

\section{Deferred Proof for Stochastic Dominance}
\label{app:stochastic-dominance}
\StochasticSumDominance*
\begin{proof}
Recall that $X\st Y$ means
\[
\Prb[X\ge s]\le \Prb[Y\ge s]
\qquad\text{for every } s\in \mathbb{R}.
\]
We use the standard equivalent characterization of first-order stochastic dominance: if
$A\st B$ and $f$ is a bounded nondecreasing function, then
$\Ex[f(A)]\le \Ex[f(B)]$.
Fix $t\in\mathbb{R}$ and define
\[
g(z):=\Prb[Y_1\ge t-z].
\]
The function $g$ is nondecreasing in $z$. Using independence of $X_1$ and
$X_2$, then $X_1\st Y_1$, then $X_2\st Y_2$, and finally independence of
$Y_1$ and $Y_2$, we get
\[
\begin{aligned}
\Prb[X_1+X_2\ge t]
&= \Ex\!\left[\Prb[X_1\ge t-X_2]\right]\\
&\le \Ex\!\left[\Prb[Y_1\ge t-X_2]\right]\\
&= \Ex[g(X_2)]\\
&\le \Ex[g(Y_2)]\\
&= \Ex\!\left[\Prb[Y_1\ge t-Y_2]\right]\\
&= \Prb[Y_1+Y_2\ge t].
\end{aligned}
\]
Therefore
\[
X_1+X_2 \st Y_1+Y_2.
\qedhere
\]
\end{proof}

\section{Tightness of the Unweighted Analysis}
\label{app:unweighted-tightness}

We record a family of instances showing that the analysis in \Cref{sec:unweighted} is
asymptotically tight. Let $k$ tend to infinity, and set $n:=k^2$. The point set is the
$n\times n$ grid $P=[n]\times[n]$. The input contains every horizontal and vertical
segment consisting of exactly $k$ consecutive grid points.

First, $\OPT=n^2/k=k^3$. Indeed, the $k$ disjoint length-$k$ horizontal segments in each
row show that every feasible solution has size at least $nk=k^3$. Conversely, the set
\[
Q^\star:=\{(i,j)\in[n]\times[n] : j-i\equiv 0 \pmod{k}\}
\]
has exactly $k$ points in every row and every column, with consecutive chosen points spaced
exactly $k$ apart. Hence every length-$k$ horizontal or vertical segment contains one
point of $Q^\star$, so $|Q^\star|=k^3=n^2/k$.

The same disjoint-segment lower bound also applies to the LP, and the uniform assignment
$x_p=1/k$ for every grid point is feasible. Hence this assignment is an optimal LP
solution. We now run the algorithm with this symmetric optimum. On each row, Phase~I
chooses exactly one residue class modulo $k$, independently and uniformly from row to row.
Thus Phase~I selects exactly $k$ points on each row, and
\[
\cost(\text{Phase~I})=k^3=\OPT.
\]

It remains to compute the expected repair cost on the columns. Fix a column. A row is
already selected in this column with probability $p:=1/k$, independently across rows. Let
$q:=1-p$ and let $Z$ be the Phase~II repair cost on this column. If a maximal run of
unselected points has length $L$, then the residual vertical segments inside this run are
exactly the length-$k$ intervals contained in the run, and the optimum one-dimensional
repair cost is $\lfloor L/k\rfloor$: the lower bound comes from the disjoint length-$k$
intervals in the run, and selecting every $k$th point attains it.

For $a\ge 1$, let $N_a$ be the number of unselected runs in the column whose length is at
least $ak$. Then $Z=\sum_{a=1}^{k}N_a$. A run of length at least $ak$ starts in the first
row with probability $q^{ak}$; it starts at any later row with probability $pq^{ak}$,
requiring the previous row to be selected and the next $ak$ rows to be unselected.
Therefore, with $\alpha:=q^k=(1-1/k)^k$,
\[
\Ex[Z]
=
\sum_{a=1}^{k}\left(q^{ak}+(n-ak)pq^{ak}\right)
=
\sum_{a=1}^{k}(k-a+1)\alpha^a \geq k \frac{\alpha (1 - \alpha^k)}{1- \alpha} - \frac{\alpha}{(1-\alpha)^2},
\]
using $\sum_{a=1}^{k} (a-1) \alpha^a \le \sum_{a=1}^{\infty} a \cdot \alpha^a = \alpha/(1-\alpha)^2$. Now, there are $k^2$ columns each contributing $\Ex[Z]$ to the expected cost of Phase~II and $\OPT = k^3$. Hence, the ratio of the expected cost of Phase~II and $\OPT$ satisfies:

\begin{equation}\label{eq:tight}
      \frac{\alpha(1 - \alpha^k)}{1 - \alpha} - \frac{1}{k}\frac{\alpha}{(1 - \alpha)^2} \leq \frac{\Ex[\cost(\text{Phase II})]}{\OPT} \leq \frac{1}{e-1},
\end{equation}
 where the second inequality follows from our analysis of the unweighted case in~\Cref{sec:unweighted}. Now, as $k\to\infty$, we have $\alpha=(1-1/k)^k\to e^{-1}$, thus the left hand side of~\eqref{eq:tight} tends to $e^{-1}/(1-e^{-1}) = 1/(e-1)$. Thus, the approximation factor $(1 + 1/(e-1))$ obtained in the unweighted analysis is asymptotically tight for the algorithm.

\section{Derandomization}
\label{sec:derandomization}
In our algorithm for axis-parallel line segments, only Phase~1 is randomized. Each horizontal line $h$ receives a shift
$U_h\in[0,1)$, and Phase~2 is deterministic once all shifts are fixed. Let the horizontal
lines containing points of $P$ be $h_1,h_2,\dots,h_r$. For a shift vector
$u=(u_1,\dots,u_r)\in[0,1)^r$, let $C(u)$ denote the final solution cost produced by the
algorithm when line $h_i$ uses shift $u_i$.

The randomized algorithm samples $U=(U_1,\dots,U_r)$ with independent coordinates
$U_i\sim \mathrm{Unif}[0,1)$ and returns cost $C(U)$. We have that $\Ex[C(U)] \le \alpha\,\OPT,$
where $\alpha=1+2/e$ in the weighted case and $\alpha=1+1/(e-1)$ in the unweighted case. Define conditional expectations
\[
\Phi_k(u_1,\dots,u_k)
:=
\Ex\!\left[C(U)\,\middle|\,U_1=u_1,\dots,U_k=u_k\right]
\qquad (k=0,1,\dots,r),
\]
with $\Phi_0=\Ex[C(U)]$. Suppose shifts $u_1,\dots,u_{k-1}$ are already fixed. Consider
\[
g_k(t):=\Phi_k(u_1,\dots,u_{k-1},t)
\qquad\text{for } t\in[0,1).
\]
By the tower property,
\[
\Ex_{U_k}[g_k(U_k)] = \Phi_{k-1}(u_1,\dots,u_{k-1}).
\]
Therefore some $t^\star\in[0,1)$ satisfies
\[
g_k(t^\star)\le \Phi_{k-1}(u_1,\dots,u_{k-1}).
\]
Set $u_k:=t^\star$. Repeating this for $k=1,\dots,r$ yields a deterministic shift vector
$u$ with
\[
C(u)=\Phi_r(u_1,\dots,u_r)\le \Phi_0=\Ex[C(U)].
\]

To make the search finite, fix a line $h_k$ with points $p_1,\dots,p_m$ in increasing
$x$-order and partial sums $a_j=\sum_{i=1}^{j}x_{p_i}$. The selection pattern on $h_k$
changes only when the shift crosses one of the fractional parts $\{a_j\}$, so $[0,1)$ is
partitioned into at most $m+1$ intervals on which the Phase~1 outcome is constant.
Therefore, for each step of the conditional-expectation procedure, it suffices to test one
representative from each interval and choose the best.

For any such representative $t$, the value $g_k(t)$ can be computed in polynomial time.
The expected Phase~1 cost is immediate from the fixed shifts and the marginals $x_p$ on
the remaining lines. For Phase~2, enumerate all possible blocks, i.e., all consecutive
runs of points on vertical lines. For each block $B$, compute the optimum
one-dimensional (weighted) hitting set for the segments contained in $B$, and multiply this
value by the probability that $B$ is present as a residual block. This probability is a
product of the relevant selection and non-selection probabilities on the points of $B$ and
its boundary points; fixed lines contribute probabilities $0$ or $1$, while unfixed lines
contribute the marginals $x_p$ or $1-x_p$. Since there are only polynomially many blocks
and each block is a one-dimensional instance, this gives a polynomial-time evaluation of
$g_k(t)$.

\begin{theorem}
There is a deterministic implementation of the algorithm whose output cost is at most the
expected cost of the randomized algorithm. In particular, it achieves approximation ratio
$1+2/e$ in the weighted case and $1+1/(e-1)$ in the unweighted case.
\end{theorem}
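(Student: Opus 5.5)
The plan is the method of conditional expectations, applied to the Phase~I shift vector $U=(U_1,\dots,U_r)$, exactly as set up in the preceding paragraphs. The key observation is that Phase~II is a deterministic function of the Phase~I outcome. So the output cost is $C(U)$ for a measurable, piecewise-constant function $C$. We already know $\Ex[C(U)]\le \alpha\,\LP\le\alpha\,\OPT$ from Theorem~\ref{thm:wt} (respectively Theorem~\ref{thm:unwt}). It therefore suffices to produce, deterministically and in polynomial time, a vector $u$ with $C(u)\le \Ex[C(U)]$.

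\textbf{Step 1: the coordinate-fixing loop.} Fix the shifts one line at a time. By the tower property and independence of the coordinates, $\Phi_{k-1}(u_1,\dots,u_{k-1})$ is the average of $g_k(t)$ over $t\in[0,1)$. Hence some $t^\star$ satisfies $g_k(t^\star)\le\Phi_{k-1}$. Inducting over $k$ gives $C(u)=\Phi_r(u)\le\Phi_0$.

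\textbf{Step 2: reduce each step to a finite search.} On line $h_k$, the set of selected points depends only on which intervals $I_i$ contain a point of $t+\mathbb{Z}$. That set is constant on each of the at most $m+1$ subintervals of $[0,1)$ cut out by the fractional parts $\{a_j\}$. So $g_k$ is piecewise constant with polynomially many pieces. Because a minimum is at most an average, the best representative among these pieces satisfies the inequality of Step~1.

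\textbf{Step 3: evaluate $g_k(t)$ exactly.} This is the main obstacle. The Phase~I part is linear. Each point on a fixed line contributes $0$ or $w(p)$, and each point on an unfixed line contributes $w(p)x_p$ by Lemma~\ref{lem:unif}.

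For Phase~II, the cost decomposes as a sum over candidate blocks $B$, namely consecutive runs on a vertical line. Each term is $\mathrm{opt}(B)\cdot \Prb[B\text{ is a residual block}]$, where $\mathrm{opt}(B)$ is a deterministic one-dimensional optimum computable by the totally unimodular LP or a DP. The event that $B$ is a residual block says that all points of $B$ are unselected and both boundary neighbours, if present, are selected.

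The delicate point is computing this probability. The points of $B$ lie on one vertical line, so they lie on pairwise distinct horizontal lines, as argued in the proof of Lemma~\ref{lem:Ap_d_orient}. Their selection events are therefore independent, and the probability factorizes: each point on a fixed line contributes a $0/1$ factor, and each point on an unfixed line contributes $x_p$ or $1-x_p$. There are $O(n^2)$ blocks, so the evaluation is polynomial. Combining the three steps yields a deterministic algorithm whose cost is at most $\Ex[C(U)]$, giving ratios $1+2/e$ and $1+1/(e-1)$.
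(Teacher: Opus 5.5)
Your proposal is correct and matches the paper's proof step for step. Both fix the horizontal-line shifts one at a time by conditional expectations, restrict each step to the at most $m+1$ pieces cut out by the fractional parts of the partial sums, and evaluate $g_k(t)$ exactly as the Phase~I marginal cost plus a sum over candidate blocks of the one-dimensional optimum times the product-form probability that the block is residual. The paper leaves one point implicit that you state explicitly: the points of a block lie on pairwise distinct horizontal lines, which is what justifies that product form.
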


The same conditional-expectation procedure derandomizes the $d$-orientation algorithm of
\Cref{sec:orientation}: fix the random shifts on the primary-orientation lines one at a
time and evaluate each conditional expectation by the same polynomial-time enumeration of
residual blocks on the remaining orientations.

\section{Discrete $3$-Slope Line Cover is APX-hard}\label{app:3slc}
\textsc{Discrete 3-Slope-Line-Cover} ($\mathsf{3SLC}$) refers to the problem of computing a minimum-cardinality hitting set for a set $\mathcal{S}$ of lines having three distinct slopes with a subset of a given set $P$ of points. In this section, we prove that this problem is APX-hard \cite{FeketeEtAl2018}.
 
\begin{definition}(\textsf{$L$-reduction}) \cite{DBLP:journals/comgeo/ChanG14}
A pair of functions $f$ and $g$, both computable in polynomial time, is an \textsf{$L$-reduction} from an optimization problem $\mathcal{P}$ to an optimization problem $\mathcal{P'}$ if there are positive constants $\alpha$ and $\beta$ such that for each instance $x$ of $\mathcal{P}$, the following hold:
\begin{itemize}
    \item (L1) The function $f$ maps instances of $\mathcal{P}$ to instances of $\mathcal{P'}$ such that OPT($f(x)$) $\leq \alpha \cdot OPT(x)$.
    \item (L2) The function $g$ maps feasible solutions $y$ of $f(x)$ to feasible solutions $g(y)$ of $x$ such that $|c_{x}(g(y)) - OPT(x)| \leq \beta . |c_{f(x)}(y) - OPT(f(x))|$, where $c_x$ and $c_{f(x)}$ are the cost functions of the instances $x$ and $f(x)$ respectively.
\end{itemize}
\end{definition}

\threeline*
\begin{proof}
We give an $L$-reduction from \textsc{Special-3SC}, which is APX-hard
\cite{DBLP:journals/comgeo/ChanG14}. Recall that an instance of \textsc{Special-3SC} is obtained
from a cubic graph $G=(V,E)$. For every vertex $v_t\in V$, there are four
elements $w_t,x_t,y_t,z_t$, and for every edge $e_i\in E$, there is an element
$a_i$. If the three edges incident to $v_t$ are $e_i,e_j,e_k$, with
$i<j<k$, then the instance contains the five sets $\{w_t,a_i\}, \{w_t,x_t\}, \{x_t,y_t,a_j\}, \{y_t,z_t\}, \{z_t,a_k\}$.
Moreover, every element $a_i$ belongs to exactly two sets.

We first construct an equivalent finite discrete $\mathsf{3SLC}$ instance in the dual plane.
Let \(L_1,L_2,L_3\) be three vertical lines, from left to right. We place the
elements \(w_t,y_t\) on \(L_1\), the elements \(x_t,z_t\) on \(L_2\), and the
elements \(a_i\) on \(L_3\). The placement is chosen so that, for each
\(v_t\) with incident edges \(e_i,e_j,e_k\), \(i<j<k\), there are five
non-vertical candidate lines whose intersections with the constructed point set
are exactly
\[
        \{w_t,a_i\},\quad
        \{w_t,x_t\},\quad
        \{x_t,y_t,a_j\},\quad
        \{y_t,z_t\},\quad
        \{z_t,a_k\}.
\]
We briefly justify that such a placement can be found in polynomial time. Put
the edge points \(a_1,\ldots,a_n\) on \(L_3\) in this order. For each vertex
\(v_t\), reserve two small open intervals \(I_t^1\subset L_1\) and
\(I_t^2\subset L_2\), with these intervals pairwise disjoint over different
vertices, such that every line joining a point of \(I_t^1\) to an edge point
\(a_r\) meets \(L_2\) inside \(I_t^2\). Such intervals are obtained, for
example, by taking disjoint thin vertical cross-sections of triangles with
common base on \(L_3\).

Now choose \(y_t\in I_t^1\) and let \(x_t\) be the intersection of
\(\overline{y_ta_j}\) with \(L_2\). Choose \(w_t\in I_t^1\) and
\(z_t\in I_t^2\). At each step, the forbidden choices are only those causing
one of the five intended candidate lines to pass through an unintended point.
There are finitely many such forbidden positions. 
Hence we can choose rational
points avoiding all of them.
See Figure 
\ref{fig:3sc_3slc_reduction}.
\medskip
\begin{figure}[htpb]
  \begin{center}
    \resizebox{145mm}{!} {\includegraphics *{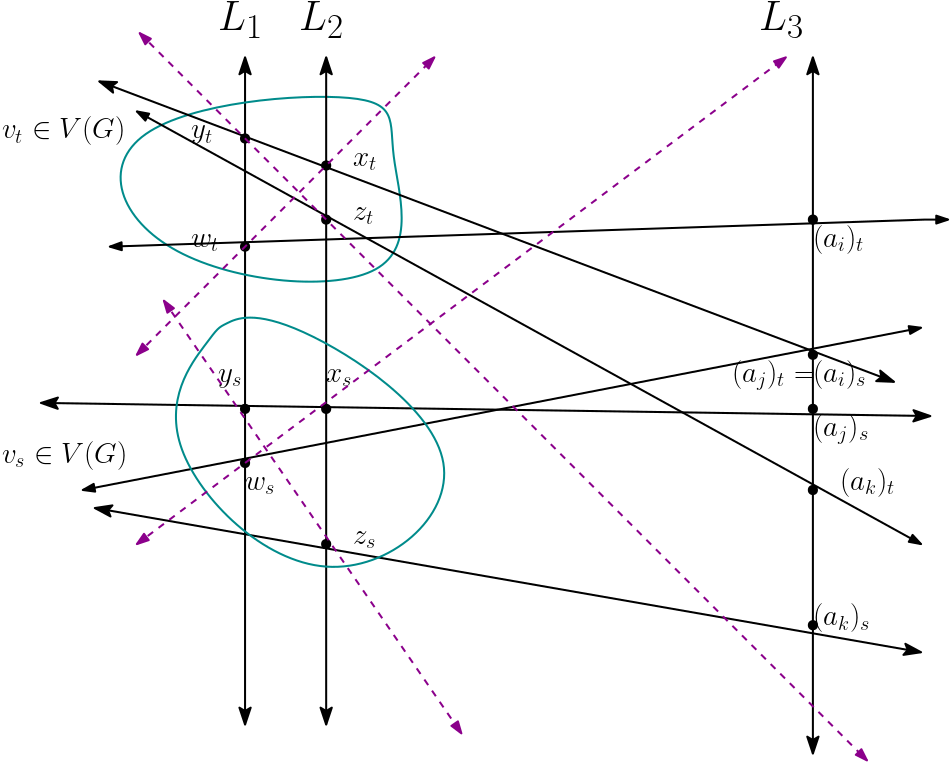}}
    \caption {
Two vertex gadgets for $v_s, v_t \in V$ in the dual construction, where $(v_s, v_t) \in E$. For each vertex \(v_t\) incident
to edges \(e_i,e_j,e_k\), \(i<j<k\), the points \(w_t,y_t\) lie on \(L_1\) and
\(x_t,z_t\) lie on \(L_2\). The three solid lines encode the sets
\(\{w_t,a_i\}\), \(\{x_t,y_t,a_j\}\), and \(\{z_t,a_k\}\); the dashed magenta
lines encode the non-edge sets \(\{w_t,x_t\}\) and \(\{y_t,z_t\}\).
}
  \label{fig:3sc_3slc_reduction}
  \end{center}
\end{figure}

It follows that the constructed dual finite line-cover instance has exactly the same set system as the \textsc{Special-3SC} instance: selecting a candidate line is
equivalent to selecting the corresponding set, and a subfamily covers all points
if and only if the corresponding sets cover the universe. Therefore
\[
        \operatorname{OPT}_{\mathrm{dual}}=\operatorname{OPT}_{\mathrm{SC}},
\]
where $\operatorname{OPT}_{\mathrm{dual}}$ denotes the optimal solution value for the dual line-cover instance and $\operatorname{OPT}_{\mathrm{SC}}$ denotes the optimal solution value for the \textsc{Special-3SC} instance.

It remains to pass from this dual finite line-cover instance to
\textsc{Discrete 3-Slope-Line-Cover}. We use the standard point-line duality
\[
        p=(p_x,p_y) \longleftrightarrow p^* : b=p_xa-p_y,
\]
and
\[
        \ell : y=ax-b \longleftrightarrow \ell^*=(a,b),
\]
which preserves incidences. The three vertical lines \(L_1,L_2,L_3\) in the
dual plane become three slopes in the primal plane. The dual points become the
input lines \(S\), and the dual candidate lines become the finite candidate
point set \(P\). Hence we obtain an instance of \textsc{Discrete 3-Slope Line
Cover}. Since incidence and cardinality are preserved, feasible covers in the
dual line-cover instance are in one-to-one correspondence with feasible hitting
sets in the resulting \textsc{Discrete 3-Slope-Line-Cover} instance.

Thus, for the reduction \(f\),
\[
        \operatorname{OPT}_{3\mathrm{SLC}}(f(I))
        = \operatorname{OPT}_{\mathrm{SC}}(I),
\]
so condition (L1) of an \(L\)-reduction holds with \(\alpha=1\). Given any
feasible solution to \(f(I)\), mapping each selected point back by duality gives
the corresponding selected set of the \textsc{Special-3SC} instance, with the
same cardinality. Therefore
\[
\left|
c_{\mathrm{SC}}(g(Y))-\operatorname{OPT}_{\mathrm{SC}}(I)
\right|
=
\left|
c_{3\mathrm{SLC}}(Y)-\operatorname{OPT}_{3\mathrm{SLC}}(f(I))
\right|,
\]
so condition (L2) holds with \(\beta=1\). The construction is polynomial-time
computable. Hence \textsc{Discrete 3-Slope-Line-Cover} is APX-hard.
\end{proof}

\end{document}